\newtheorem{theorem}{Theorem}[section]
\newtheorem{lemma}[theorem]{Lemma}
\newtheorem{proposition}[theorem]{Proposition}
\theoremstyle{definition}
\newtheorem{definition}[theorem]{Definition}
\newtheorem{remark}[theorem]{Remark}
\begin{document}

\title{Change-point detection in dynamic networks via graphon estimation}
%

%
%
\author{
Zifeng Zhao\\
Department of Information Technology, \\
Analytics, and Operations\\
University of Notre Dame \\
\texttt{zzhao2@nd.edu } \\
\And
Li Chen \\
Department of Applied and\\
Computational Mathematics and Statistics \\
University of Notre Dame \\
\texttt{lchen6@nd.edu} \\
\And
Lizhen Lin \\
Department of Applied and\\
Computational Mathematics and Statistics \\
University of Notre Dame \\
\texttt{lizhen.lin@nd.edu} \\
}

\maketitle

\begin{abstract}
We propose a general approach for change-point detection in dynamic networks. The proposed method is model-free and covers a wide range of dynamic networks. The key idea behind our approach is to effectively utilize the network structure in designing  change-point detection algorithms. This is done via an initial step of graphon estimation, where we propose a modified neighborhood smoothing~(MNBS) algorithm
for estimating the link probability matrices of a dynamic network. Based on the initial graphon estimation, we then develop a screening and thresholding algorithm for multiple change-point detection in dynamic networks. The convergence rate and consistency for the change-point detection procedure are derived as well as those for MNBS. When the number of nodes is large~(e.g., exceeds the number of temporal points), our approach yields a faster convergence rate in detecting change-points comparing with an algorithm that simply employs averaged information of the dynamic network across time.
Numerical experiments demonstrate robust performance of the proposed algorithm for change-point detection under various types of dynamic networks, and superior performance over existing methods is observed. A real data example is provided to illustrate the effectiveness and practical impact of the procedure.

\end{abstract}

\section{Introduction}\label{sec-intro}

The last few decades have witnessed rapid advancement in models, computational algorithms and theories for inference of networks. This is largely motivated by the increasing prevalence of network data in  diverse fields of science, engineering and society, and the need to extract meaningful scientific information out of these network data.  In particular, the emerged field of statistical network analysis has spurred development of  many statistical models such as latent space model \citep{hoff:latent}, stochastic block model and their variants \citep{holland1983stochastic, newmanblock,ball2011efficient, rohe_yu}, and associated algorithms \citep{smyth:spectral, ng01spectral,LuxburgTutorial07,amini2014semidefinite} for various inference tasks including link prediction, community detection and so on.  However, the existing literature has been mostly focused on the analysis of one (and often large) network. While inference of single network remains to be an important research area due to its abundant applications in social network analysis, computational biology and other fields, there is emerging need to be able to analyze a collection of multiple network objects \citep{eric-paper, paperwitheric, NIPS2017_7282,spec-sparse}, with one notable example being temporal or dynamic networks. For example, it has become standard practice in many areas of neuroscience~(e.g., neuro-imaging) to use networks to represent various notions of connectivity among regions of interest~(ROI) in the brain observed in a temporal fashion when the subjects are engaged in some learning tasks. Analysis of such data demands development of new network models and tools, and leads to a growing literature on inference of dynamic networks, see e.g., \citep{Sewell2015, Pensky2018, sarkar05a}.

Our work focuses on change-point detection in dynamic networks, which is an important yet less studied aspect of learning dynamic networks.
The key insight of our proposed approach is to effectively utilize the network structure for efficient change-point detection in dynamic networks. This is done by first performing graphon estimation~(i.e. link probability matrix estimation) for the dynamic network, which then serves as basis of a screening and thresholding procedure for change-point detection. For graphon estimation in dynamic networks, we propose a novel modified neighborhood smoothing~(MNBS) algorithm, where a faster convergence rate is achieved via simultaneous utilization of the network structure and repeated observations of dynamic networks across time.

Most existing literature on change-point detection in dynamic networks~\citep[e.g.,][]{Peel2015, Marangoni-Simonsen2015, Wang2017, Corneli2017} rely on specific model assumptions and only provide computational algorithms without theoretical justifications. In contrast, our method is nonparametric/model-free and thus can be applied to a wide range of dynamic networks. Moreover, we thoroughly study the consistency and convergence properties of our change-point detection procedure and provide its theoretical guarantee under a formal statistical framework. Numerical experiments on both synthetic and real data are conducted to further demonstrate the robust and superior performance of the proposed method.

The paper is organized as follows. Section \ref{sec:related} discusses related work. Section \ref{sec:methods} proposes an efficient graphon~(link probability matrix) estimation method -- MNBS, and Section \ref{sec:change} introduces our change-point detection procedure. Numerical study on synthetic and real networks is carried out in Section \ref{sec:simu}. Our work concludes with a discussion. Technical proofs, additional numerical studies and suggestions of an additional graphon estimator can be found in the supplementary material.

\section{Related work}\label{sec:related}
The focus of this paper is on change-point detection for dynamic networks. Since our work employs an initial step of graphon estimation, we first review some literature on graphon estimation. We then discuss related work on change-point detection in dynamic networks. 

\cite{nbhd} proposes a novel estimator for estimating the link probability matrix $P$ of an undirected network by neighborhood smoothing (NBS). The essential idea consists of the following: 
Given an adjacent matrix $A$, the link probability $P_{ij}$ between node $i$ and $j$ is estimated by 
\begin{align*}
\hat{P}_{ij}=\frac{\sum_{i'\in \mathcal N_i} A_{i'j}}{|\mathcal N_i|},
\end{align*}
where $\mathcal N_i$ is a certain set of neighboring nodes of node $i$, which consists of the nodes that exhibit similar connection patterns as node $i$. With a well-designed neighborhood adaptive to the network structure, the smoothing achieves an accurate estimation for $P$. NBS in \cite{nbhd} estimates $P$ with a single adjacency matrix $A$. For a dynamic network, a sequence of adjacency matrices $A^{(1)},\ldots, A^{(T)}$ is available, which provides extra information of the network. By aggregating information from repeated observations across time, in Section \ref{sec:methods}, we propose a modified NBS by carefully shrinking the neighborhood size, which yields a better convergence rate in estimating the link probability matrix $P$ and thus an improved rate in change-point detection.

The literature for change-point detection in dynamic networks consists of different approaches. One stream of literature takes the approach of converting the sequence of networks into a time series of scalar/vector values via feature extraction of the networks or subsampling of the nodes, and then applying traditional change-point detection techniques for time series \citep[e.g.,][]{Priebe2005, McCulloh2011, Koutra2016, Wang2017}. Though being model-free, one drawback of this approach is the potential loss of information, thus power, in the conversion process. Another stream of literature typically assumes a specific generative model of the networks and develops model-based hypothesis testing for change-point detection \citep[e.g.,][]{Moreno2013,Peel2015}. The downside of this approach is its limited applicability to various dynamic networks due to the specific model assumption.
Moreover, almost all the existing literature focus on developing computational algorithms without providing formal theoretical guarantees, making it difficult to assess the properties of the algorithms. One exception is \cite{Chen2015}, where a graph-based nonparametric testing procedure is proposed for change-point detection in a general data sequence such as networks and is shown to attain a pre-specified level of type-I error. 
Originally designed for single change-point detection, \cite{Chen2015} can be extended to multiple change-point detection combined with binary segmentation~\citep{Vostrikova1981}. We compare our procedure with \cite{Chen2015} in the simulation study and real data analysis.

We also notice three recent works on change-point detection in dynamic networks. \cite{Bhattacharjee2018} considers change-point estimation in a dynamic stochastic block model~(SBM). \cite{Mukherjee2018} uses CUSUM statistics and discusses change-point detection for general networks. However, both works assume the prior knowledge of one and only one change-point. In contrast, in this paper, we consider the general setting where no prior information about the existence and number of change-points is given, which is more realistic and applicable to real data. \cite{Wang2018} also studies the problem of change point detection in dynamic networks and provides rigorous theoretical results. We note that \cite{Wang2018} is mainly a theoretical work in the sense that though a computational algorithm is provided in the paper, it does not offer any guidance on the (potentially delicate and challenging) choices of tuning parameters nor provide any simulation studies, making numerical comparisons infeasible. The philosophy of \cite{Wang2018} and our work are fundamentally different: our change-point algorithm is built upon a refined network estimation procedure and offers both change-point detection and post-hoc network estimation. To achieve an information-theoretic bound, \cite{Wang2018} essentially treats change-point detection for dynamic networks as change-point detection for high-dimensional vectors with independent Bernoulli entries. Their focus is solely on change-point detection and does not consider post-hoc network estimation.

Another related area of research is anomaly detection in dynamic networks, where the task is to detect short/abrupt deviation of the network behavior from its norm. This is not the focus of our paper and we refer the readers to \cite{Ranshous2015} for a comprehensive survey. 

\section{Modified neighborhood smoothing for dynamic networks  }\label{sec:methods}

In this section, we propose a neighborhood smoothing based estimator for link probability matrix estimation given repeated observations of an undirected network. 
This later serves as the basis for the proposed algorithm of multiple change-point detection for dynamic networks in Section \ref{sec:change}.

The basic setting is as follows. Given a network with a link probability matrix $P$, assume one observes independent (symmetric) adjacency matrices $A^{(t)}$ ($t=1,\ldots,T$) such that $A_{ij}^{(t)}\sim \text{Bernoulli}(P_{ij})$ for $i\leq j$, independently.
Based on the repeated observations $\{A^{(t)}\}_{t=1}^T$, we want to estimate the link probability matrix $P$. Note that when $T=1$, this reduces to the classical problem of link probability matrix estimation for a network (e.g., \cite{chatterjee2015}, \cite{gao2015} and \cite{nbhd}).
In particular, the neighborhood smoothing~(NBS) proposed by \cite{nbhd} is a computationally feasible algorithm which enjoys competitive error rate and is demonstrated to work well for real networks. Motivated by NBS, we propose a modified neighborhood smoothing~(MNBS) algorithm, which incorporates the repeated observations of the network across time and thus further improves the estimation accuracy of NBS.

Let $\bar{A}=\sum_{t=1}^{T}A^{(t)}/T,$ we define the distance measure between node $i$ and $i'$ as in \cite{nbhd} such that
$\tilde{d}^2(i,i')=\max_{k\neq i,i'}|\langle \bar{A}_{i\cdot}-\bar{A}_{i'\cdot}, \bar{A}_{k\cdot}\rangle|$, where $A_{i\cdot}$ denotes the $i$th row of $A$ and $\langle \cdot\,\cdot \rangle$ denotes the inner product of two vectors.
Based on the distance metric, define the neighborhood of node $i$ as
\begin{align}
\label{mnbs_nbd}
\mathcal{N}_i=\left\{i'\neq i: \tilde{d}(i,i') \leq q_i(q) \right\},
\end{align}
where $q_i(q)$ denotes the $q$th quantile of the distance set $\left\{\tilde{d}(i,i'): i'\neq i \right\}$. Given neighborhood $\mathcal{N}_i$ for each node $i$, we define the modified neighborhood smoothing~(MNBS) estimator as
\begin{align}
\label{mnbs_estimator}
\tilde{P}_{ij}=\frac{\sum_{i'\in\mathcal{N}_i} \bar{A}_{i'j}}{|\mathcal{N}_i|}.
\end{align}

Note that $q$ is a tuning parameter and affects the performance of MNBS via a bias-variance trade-off. In \cite{nbhd}, where $T=1$, the authors set $q=C(\log n/n)^{1/2}$ for some constant $C>0$. Thus, for each node $i$, the size of its neighborhood $|\mathcal{N}_i|$ is roughly $C(n\log n)^{1/2}$.

For MNBS, we set $q=C\log n/(n^{1/2}\omega)$, where $\omega = \min(n^{1/2},(T\log n)^{1/2})$. When $T=1$, MNBS reduces to NBS. For $T>1$, we have $\log n/(n^{1/2}\omega) <  (\log n/n)^{1/2}$ and thus MNBS estimates $P$ by smoothing over a smaller neighborhood. From a bias-variance trade-off point of view, the intuition behind this modification is that we can shrink the size of $\mathcal{N}_i$ to reduce the bias of $\tilde{P}_{ij}$ introduced by neighborhood smoothing while the increased variance of $\tilde{P}_{ij}$ due to the shrunken neighborhood can be compensated by the extra averaged information brought by $\{A^{(t)}\}_{t=1}^T$ across time.

We proceed with studying theoretical properties of MNBS. We assume the link probability matrix $P$ is generated by a graphon $f: [0,1]^2\times \mathbb{N} \to [0,1]$ such that $f(x,y)=f(y,x)$ and 
\begin{align*}
P_{ij}=f(\xi_i, \xi_j), \text{  for }i,j=1,\ldots,n, \text{ and } \xi_i \overset{i.i.d.}{\sim} \text{Uniform}[0,1].
\end{align*}

As in \cite{nbhd}, we study properties of MNBS for a piecewise Lipschitz graphon family, where the behavior of the graphon function $f(x,y)$ is regulated in the following sense.

\begin{definition}[Piecewise Lipschitz Graphon Family]
	\label{piecewise_lipschitz}
	For any $\delta, L>0$, let $\mathcal{F}_{\delta;L}$ be a family of piecewise Lipschitz graphon functions $f: [0,1]^2\times \mathbb{N}\to [0,1]$ such that $(i)$ there exists an integer $K\geq 1$ and a sequence $0=x_0<\cdots<x_K=1$ satisfying $\min_{0\leq s\leq K-1}(x_{s+1}-x_s)>\delta$, and $(ii)$ both $|f(u_1,v)-f(u_2,v)|\leq L|u_1-u_2|$ and $|f(u,v_1)-f(u,v_2)|\leq L|v_1-v_2|$ hold for all $u,u_1,u_2\in [x_s,x_{s+1})$, $v,v_1,v_2\in [x_t,x_{t+1})$ and $0\leq s,t\leq K-1.$
\end{definition}

As is illustrated by \cite{nbhd}, this graphon-based theoretical framework is a general model-free scheme that covers a wide range of exchangeable networks such as the commonly used Erd\"os-R\'enyi model and stochastic block model. See more detailed discussion about Definition \ref{piecewise_lipschitz} in \cite{nbhd}. 

For any $P, Q \in \mathbb{R}^{n\times n}$, define $d_{2,\infty}$, the normalized $2,\infty$ matrix norm, by
$$d_{2,\infty}(P,Q)=n^{-1/2}\|P-Q\|_{2,\infty}=\max_i n^{-1/2}\|P_{i\cdot}-Q_{i\cdot}\|_2.$$
We have the following error rate bound for MNBS. The sample size $T$ is implicitly taken as a function of $n$ and all limits are taken over $n\to \infty.$ 

\begin{theorem}[Consistency of MNBS]
	\label{MNBS_convergence}
	Assume $L$ is a global constant and $\delta=\delta(n,T)$ depends on $n,T$ satisfying $\lim_{n\to\infty} \delta/(n^{-1/2}\omega^{-1}\log n) \to \infty$ where $\omega =\omega(n,T) = \min(n^{1/2},(T\log n)^{1/2})$,  then the estimator $\tilde{P}$ defined in \eqref{mnbs_estimator}, with neighborhood $\mathcal{N}_i$ defined in \eqref{mnbs_nbd} and $q=B_0\log n/(n^{1/2}\omega)$ for any global constant $B_0>0$ satisfies
	\begin{align*}
	\max_{f\in \mathcal{F}_{\delta;L}} P\left( d_{2,\infty}(\tilde{P},P)^2 \geq C\frac{\log n}{n^{1/2}\omega} \right) \leq n^{-\gamma},
	\end{align*}
	for any $\gamma>0$, where $C$ is a positive global constant depending on $B_0$ and $\gamma.$
\end{theorem}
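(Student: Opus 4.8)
The plan is to adapt the bias--variance analysis of neighborhood smoothing in \cite{nbhd} to the averaged matrix $\bar A$, tracking how the repeated observations tighten every concentration bound through the factor $T$. Fixing a node $i$, decompose the entrywise error of the estimator \eqref{mnbs_estimator} as
$$\tilde P_{ij}-P_{ij}=\frac{1}{|\mathcal N_i|}\sum_{i'\in\mathcal N_i}(\bar A_{i'j}-P_{i'j})+\frac{1}{|\mathcal N_i|}\sum_{i'\in\mathcal N_i}(P_{i'j}-P_{ij}),$$
the first summand being a stochastic term and the second a bias term. Since $d_{2,\infty}(\tilde P,P)^2=\max_i n^{-1}\sum_j(\tilde P_{ij}-P_{ij})^2$, it suffices to bound the squared $\ell_2$-contribution of each summand by $O(\log n/(n^{1/2}\omega))$ with probability at least $1-n^{-\gamma-1}$ (combining the two via $(a+b)^2\le 2a^2+2b^2$) and then to union bound over the $n$ nodes. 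The basic input is a Bernstein bound for $\bar A_{ij}-P_{ij}$: being an average of $T$ independent $\mathrm{Bernoulli}(P_{ij})$ variables, it concentrates at scale $(\log n/T)^{1/2}$, and this sharpening over the single-observation case is precisely what produces the improved rate.

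The second step characterizes the data-dependent neighborhood $\mathcal N_i$. I would first show that the empirical distance $\tilde d^2(i,i')$ concentrates around a deterministic population distance $d^2(i,i')$ built from the graphon rows $f(\xi_i,\cdot)$ and $f(\xi_{i'},\cdot)$, with fluctuations again governed by the $T$-averaged concentration of the first step. Because $q=B_0\log n/(n^{1/2}\omega)$ fixes the neighborhood size at $|\mathcal N_i|\asymp nq\asymp n^{1/2}\log n/\omega$, the quantile $q_i(q)$ is the distance to the $nq$-th closest row. The hypothesis $\delta/(n^{-1/2}\omega^{-1}\log n)\to\infty$, i.e.\ $\delta\gg q$, guarantees that the $\asymp nq$ nodes whose $\xi_{i'}$ lies within latent radius $O(q)$ of $\xi_i$ --- which already certify the quantile --- all fall in the same Lipschitz piece as $\xi_i$, so that $q_i(q)^2=O(q)$ and a genuine Lipschitz bound applies to every selected row. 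The core lemma transcribed from \cite{nbhd} then converts a small graphon distance into a small row difference: choosing auxiliary nodes $k$ with $\xi_k$ near $\xi_i$ and near $\xi_{i'}$, one expresses $n^{-1}\|P_{i\cdot}-P_{i'\cdot}\|_2^2$ as a difference of two inner products that each appear, up to a Lipschitz remainder, in the maximum defining $d^2(i,i')$, yielding $n^{-1}\|P_{i\cdot}-P_{i'\cdot}\|_2^2\lesssim d^2(i,i')$.

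The bias term then follows by Jensen, since $n^{-1}\sum_j(|\mathcal N_i|^{-1}\sum_{i'\in\mathcal N_i}(P_{i'j}-P_{ij}))^2\le \max_{i'\in\mathcal N_i}n^{-1}\|P_{i'\cdot}-P_{i\cdot}\|_2^2\lesssim q_i(q)^2=O(\log n/(n^{1/2}\omega))$, matching the target. For the stochastic term, treating $\mathcal N_i$ as fixed and the averaged entries as independent gives a variance of order $1/(T|\mathcal N_i|)\asymp \omega/(Tn^{1/2}\log n)$; a short computation, split into the regimes $\omega=(T\log n)^{1/2}$ and $\omega=n^{1/2}$, shows this equals the target rate divided by $\log n$, so that after the $\log n$ inflation from the Bernstein tail the stochastic contribution also sits at $O(\log n/(n^{1/2}\omega))$. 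Summing the two and choosing $C=C(B_0,\gamma)$ large enough delivers the per-node bound.

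I expect the main obstacle to be the stochastic term, because $\mathcal N_i$ is selected using the very matrix $\bar A$ whose deviations are being averaged, so the indicators $\{i'\in\mathcal N_i\}$ are not independent of the residuals $\bar A_{i'j}-P_{i'j}$. I would decouple this by establishing the variance bound uniformly over the family of candidate index sets of the correct cardinality that are consistent with the quantile event, absorbing the extra combinatorial complexity into the $\log n$ slack and the constant $C$. The remaining work is bookkeeping: the events of entrywise concentration of $\bar A$, concentration of $\tilde d$ around $d$, existence of enough nearby nodes inside each Lipschitz piece, and control of $|\mathcal N_i|$ must all hold on a single event of probability at least $1-n^{-\gamma}$, which is arranged by taking $C=C(B_0,\gamma)$ large and union-bounding over the $n$ nodes, yielding the stated tail uniformly over $f\in\mathcal F_{\delta;L}$.
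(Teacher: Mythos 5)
Your overall route coincides with the paper's: the same bias--variance split of $\tilde P_{ij}-P_{ij}$ into a stochastic average of $\bar A_{i'j}-P_{i'j}$ and a smoothing bias, Bernstein bounds sharpened by the factor $T$ from averaging, a neighborhood argument showing that enough latent positions fall within radius $O(\log n/(n^{1/2}\omega))$ of $\xi_i$ inside the same Lipschitz piece so that the $q$-quantile is certified and every selected row satisfies $n^{-1}\|P_{i\cdot}-P_{i'\cdot}\|_2^2=O(\log n/(n^{1/2}\omega))$ (via the auxiliary-node inner-product comparison), and Jensen for the bias term. This matches Lemmas \ref{nbd_size} and \ref{nbd_property} and the proof of Theorem \ref{MNBS_convergence} step for step, including the rate bookkeeping in the two regimes of $\omega$.

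The one step that would fail as you describe it is your decoupling of the stochastic term from the data-dependent selection of $\mathcal N_i$. A union bound over all candidate index sets of cardinality $m=|\mathcal N_i|\asymp n^{1/2}\log n/\omega$ costs entropy $\log\binom{n}{m}\asymp m\log n\gg\log n$, and forcing each per-set tail below $\exp(-m\log n)$ inflates the admissible deviation of $n^{-1}\sum_j\bigl(m^{-1}\sum_{i'\in S}(\bar A_{i'j}-P_{i'j})\bigr)^2$ above the target rate $\log n/(n^{1/2}\omega)$; for instance when $\omega=(T\log n)^{1/2}$ the required threshold exceeds the target by a factor of order $n^{1/2}\omega/T\geq\log n$, so the combinatorial complexity cannot be absorbed into a $\log n$ slack. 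The paper avoids the issue entirely by expanding the square: the diagonal contribution is at most $|\mathcal N_i|^{-1}\max_{i'}n^{-1}\sum_j(\bar A_{i'j}-P_{i'j})^2$ and the off-diagonal contribution is at most $\max_{i'\neq i''}\bigl|n^{-1}\sum_j(\bar A_{i'j}-P_{i'j})(\bar A_{i''j}-P_{i''j})\bigr|$, so uniform control over only the $n$ single indices and $n^2$ pairs --- a polynomial union bound --- already makes the bound valid for whatever set the quantile rule selects. With that substitution your argument goes through.
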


It is easy to see that the error rate in Theorem \ref{MNBS_convergence} also holds for the normalized Frobenius norm $d_F(\tilde{P},P)=n^{-1}\|\tilde{P}-P\|_F$. For $T=1$, $\omega=(\log n)^{1/2}$ and MNBS recovers the error rate $(\log n/n)^{1/2}$ of NBS in \cite{nbhd}. 
For $n>T$, which is the realistic case for repeated temporal observations of a large dynamic network, we can set $\omega=(T\log n)^{1/2}$ for MNBS and thus have
$$\max_{f\in \mathcal{F}_{\delta;L}}P\left(d_{2,\infty}(\tilde{P},P)^2\geq C\left(\frac{\log n}{nT} \right)^{1/2} \right)\leq n^{-\gamma}.$$
In other words, the network structure among $n$ nodes and the repeated observations along time dimension $T$ both help achieve better estimation accuracy for MNBS. In contrast, it is easy to see that a simply averaged $\bar{A}$ across time $T$ cannot achieve improved performance when $n$ increases.

\begin{remark}
	Another popular estimator of $P$, which is more scalable for large networks, is the USVT (Universal Singular Value Thresholding) proposed by \cite{chatterjee2015}. In Section \ref{sec:MUSVT} of the supplementary material, we propose a modified USVT for dynamic networks by carefully lowering the singular value thresholding level for $\bar A$. However, the convergence rate of the modified USVT is shown to be slower than MNBS. 
	Thus we do not pursue USVT-based change-point detection here.
\end{remark}

\section{MNBS-based multiple change-point detection}\label{sec:change}
In this section, we propose an efficient multiple change-point detection procedure for dynamic networks, which is built upon MNBS. We assume the observed dynamic network $\{A^{(t)}\}_{t=1}^T$ are generated by a sequence of probability matrices $\{P^{(t)}\}_{t=1}^T$ with $A_{ij}^{(t)}\sim \text{Bernoulli}(P_{ij}^{(t)})$ for $t=1,\ldots,T$. 
We are interested in testing the existence and further estimating the locations of potential change-points where $P^{(t)}\not = P^{(t+1)}$.
More specifically, we assume there exist $J$~($J\geq 0$) unknown change-points $\tau_0 \equiv 0<\tau_1<\tau_2<\ldots<\tau_{J}<T\equiv\tau_{J+1}$ such that
\begin{align*}
P^{(t)}=P_j, \text{ for } t=\tau_{j-1}+1,\ldots,\tau_{j}, \text{ and } j=1,\ldots,J+1.
\end{align*}
In other words, we assume there exist $J+1$ non-overlapping segments of $(1,\ldots,T)$ where the dynamic network follows the same link probability matrix on each segment and $P_j$ is the link probability matrix of the $j$th segment satisfying $P_j\not=P_{j+1}$. 
Denote $\mathcal{J}=\{\tau_1 <\tau_2 <\ldots<\tau_{J} \}$ as the set of true change-points and define $\mathcal{J}=\emptyset$ if $J=0$. Note that the number of change-points $J$ is allowed to grow with the sample size $(n,T)$.

\subsection{A  screening and thresholding  change point detection algorithm}
For efficient and scalable computation, we adapt a screening and thresholding algorithm that is commonly used in change-point detection for time series, see, e.g. \cite{Lee1996}, \cite{Niu2012}, \cite{Zou2014} and \cite{Yau2016}. The MNBS-based detection procedure works as follows.

\textbf{\textit{Screening}}: Set a screening window of size $h \ll T$. For each $t=h, \ldots, T-h$, we calculate a local window based statistic
\begin{align*}
D(t,h)=d_{2,\infty}(\tilde{P}_{t1,h},\tilde{P}_{t2,h})^2,
\end{align*}
where $\tilde{P}_{t1,h}$ and $\tilde{P}_{t2,h}$ are the estimated link probability matrices based on observed adjacency matrices $\{A^{(i)}\}_{i=t-h+1}^t$ and $\{A^{(i)}\}_{i=t+1}^{t+h}$ respectively by MNBS.

The local window size $h\ll T$ is a tuning parameter. In the following, we only consider the case where $(h \log n)^{1/2}\leq n^{1/2}$, which is the most likely scenario for real data applications and thus is more interesting. Therefore, for MNBS, we can set $\omega=\min(n^{1/2}, (h\log n)^{1/2})=(h\log n)^{1/2}$ and $q=B_0 (\log n)^{1/2}/(n^{1/2}h^{1/2})$. The result for $(h \log n)^{1/2}> n^{1/2}$ can be derived accordingly.

Intuitively, $D(t,h)$ measures the difference of the link probability matrices within a small neighborhood of size $h$ before and after $t$, where a large $D(t,h)$ signals a high chance of being a change-point. We call a time point $x$ an $h$-local maximizer of the function $D(t,h)$ if
\begin{align*}
D(x,h)\geq D(t,h), \text{ for all } t=x-h+1,\ldots, x+h-1.
\end{align*}

\textbf{\textit{Thresholding}}: Let $\mathcal{LM}$ denote the set of all $h$-local maximizers of the function $D(t,h)$, we estimate the change-points by applying a thresholding rule to $\mathcal{LM}$ such that
\begin{align*}
\hat{\mathcal{J}}=\{t| t \in \mathcal{LM} \text{ and } D(t,h)>\Delta_D\}, 
\end{align*}
where $\hat{\mathcal{J}}$ is the set of estimated change-points, $\hat{J}=\text{Card}(\hat{\mathcal{J}})$ and $\Delta_D$ is the threshold taking the form
\begin{align*}
\Delta_D=\Delta_D(h,n)= D_0 \frac{(\log n)^{1/2+\delta_0}}{n^{1/2}h^{1/2}},
\end{align*}
for some constants $D_0>0$ and $\delta_0>0$. Note that $\Delta_D$ dominates the asymptotic order of the MNBS estimation error $C(\log n)^{1/2}/(n^{1/2}h^{1/2})$  of $\tilde{P}_{t1,h}$ and $\tilde{P}_{t2,h}$ quantified by Theorem \ref{MNBS_convergence}.

The proposed algorithm is scalable and can readily handle change-point detection in large-scale dynamic networks as MNBS can be easily parallelized over $n$ nodes and the screening procedure is parallelizable over time $t=h,\ldots,T-h$.

\subsection{Theoretical guarantee of the change-point detection procedure}
We first define several key quantities that are used for studying theoretical properties of the MNBS-based change-point detection procedure. Define $\Delta^j=d_{2,\infty}(P_j,P_{j+1})^2$ for $j=1,\ldots,J$ and $\Delta^*=\min_{1\leq j\leq J}\Delta^j$, which is the minimum signal level in terms of $d_{2,\infty}$ norm. Also, define $D^*=\min_{1\leq j\leq J+1}(\tau_{j}-\tau_{j-1})$, which is the minimum segment length. We assume for each segment $j=1,\ldots,J+1$, its link probability matrix $P_j$ is generated by a piecewise Lipschitz graphon $f_j\in \mathcal{F}_{\delta;L}$ as in Definition \ref{piecewise_lipschitz}, where common constants $(\delta,L)$ are shared across segments. 

Note that $\mathcal{J}=\mathcal{J}(n,T)$, $J=J(n,T)$, $\Delta^*=\Delta^*(n,T)$, $D^*=D^*(n,T)$ and $\delta=\delta(n,T)$ are functions of $(n,T)$ implicitly. We have the following consistency result.

\begin{theorem}[Consistency of MNBS-based multiple change-point detection]
\label{cp_detection}
Assume there exists some $\gamma>0$ such that $T/n^\gamma \to 0$. Assume $L$ is a global constant and assume $\delta=\delta(n,T)$, $h=h(n,T)$ and $D^*=D^*(n,T)$ depend on $n,T$ satisfying $h< D^*/2$ and $\lim_{n\to\infty} \delta/(n^{-1}h^{-1}\log n)^{1/2} \to \infty$. 

If assume further that the minimum signal level $\Delta^*=\Delta^*(n,T)$ exceeds the detection threshold $\Delta_D=D_0 (\log n)^{1/2+\delta_0}/(n^{1/2}h^{1/2})$, i.e. $\lim_{n\to\infty}\Delta^*/\Delta_D>1$, then the MNBS-based change-point detection procedure with $q=B_0(\log n)^{1/2}/(n^{1/2}h^{1/2})$ satisfies 
\begin{align*}
\lim_{n\to\infty} P\left(\{\hat{J}=J\} \cap \{\mathcal{J}\subset: \hat{\mathcal{J}}\pm h \} \right)=1,
\end{align*}
for any constants $B_0, D_0, \delta_0>0,$ where $\mathcal{J}\subset: \hat{\mathcal{J}}\pm h$ means $\tau_j \in \{\hat{\tau}_j-h+1,\ldots,\hat{\tau}_j+h-1\} $ for $j=1,\ldots,J.$
\end{theorem}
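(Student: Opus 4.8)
The plan is to condition on a high‑probability \emph{good event} on which every MNBS estimate is uniformly accurate, and then carry out an essentially deterministic argument comparing the screening statistic $D(t,h)$ with its population version. For a length‑$h$ window $W$ write $\bar P_W=\frac1h\sum_{s\in W}P^{(s)}$ for the time‑averaged link probability matrix, and let $W_1(t)=[t-h+1,t]$, $W_2(t)=[t+1,t+h]$. First I would show that, with probability at least $1-2Tn^{-\gamma}$, simultaneously over all $t$,
\[
d_{2,\infty}(\tilde P_{t1,h},\bar P_{W_1(t)})\le r_n,\qquad d_{2,\infty}(\tilde P_{t2,h},\bar P_{W_2(t)})\le r_n,\qquad r_n^2=C\frac{(\log n)^{1/2}}{n^{1/2}h^{1/2}} .
\]
For a window contained in a single segment this is exactly Theorem~\ref{MNBS_convergence} with $\omega=(h\log n)^{1/2}$ (note the stated condition $\delta/(n^{-1}h^{-1}\log n)^{1/2}\to\infty$ is precisely its applicability condition); the union bound over the $O(T)$ distinct windows is harmless because the assumption $T/n^{\gamma}\to0$ lets me take the exponent in Theorem~\ref{MNBS_convergence} large enough that $Tn^{-\gamma}\to0$.

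Next I would compute the population tent. By the (reverse) triangle inequality for $d_{2,\infty}$, on the good event $|\sqrt{D(t,h)}-s(t)|\le 2r_n$ uniformly, where $s(t):=d_{2,\infty}(\bar P_{W_1(t)},\bar P_{W_2(t)})$. Since $h<D^*/2$, any width‑$2h$ window meets at most one change‑point, and for $t=\tau_j\pm k$ with $0\le k\le h$ one window is clean while the other has mean $\frac{h-k}{h}P_j+\frac{k}{h}P_{j+1}$ (or its mirror), so $\bar P_{W_1}-\bar P_{W_2}=\frac{h-k}{h}(P_j-P_{j+1})$ and hence
\[
s(\tau_j\pm k)=\tfrac{h-k}{h}\,(\Delta^j)^{1/2},\qquad s(t)=0\ \text{ for } t\notin\textstyle\bigcup_j I_j,\quad I_j:=[\tau_j-h+1,\tau_j+h-1].
\]
Thus in the square‑root scale $s$ is an exact symmetric tent, linear, peaking at $(\Delta^j)^{1/2}$ at $\tau_j$ and vanishing off $I_j$; because $\tau_{j+1}-\tau_j\ge D^*>2h$, the $I_j$ are pairwise disjoint.

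The three conclusions then follow deterministically. Off $\bigcup_j I_j$ we have $s=0$, so $D(t,h)\le4r_n^2=o(\Delta_D)$ and no above‑threshold local maximizer can occur there. At $t=\tau_j$, $D(\tau_j,h)\ge((\Delta^j)^{1/2}-2r_n)^2$, which exceeds $\Delta_D$ for large $n$ since $\lim\Delta^*/\Delta_D>1$ while $r_n/(\Delta^*)^{1/2}\to0$ (the extra $(\log n)^{\delta_0}$ in $\Delta_D$ is what buys this), giving detection. For localization, if $x\in I_j$ is an $h$‑local maximizer then $\tau_j$ lies in its defining window, so $D(x,h)\ge D(\tau_j,h)$; combined with the tent this yields $(\Delta^j)^{1/2}\,|x-\tau_j|/h\le4r_n$, i.e.\ $|x-\tau_j|\le k^*:=4r_n h/(\Delta^*)^{1/2}=o(h)$. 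Finally, distinct $h$‑local maximizers are separated by at least $h$, so since every above‑threshold one lies within $k^*=o(h)$ of some $\tau_j$ and $2k^*<h$ eventually, each $I_j$ contributes exactly one, namely the maximizer $\hat\tau_j=\arg\max_{t\in I_j}D(t,h)$ (one checks its window cannot reach the above‑threshold part of a neighbouring $I_{j\pm1}$, again using $D^*>2h$ and $k^*=o(h)$). Hence $\hat J=J$ and $|\hat\tau_j-\tau_j|\le k^*<h$, i.e.\ $\{\hat J=J\}\cap\{\mathcal J\subset:\hat{\mathcal J}\pm h\}$, on an event of probability $\to1$.

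The hard part will be the straddling‑window case of the first step, since Theorem~\ref{MNBS_convergence} is stated for $P$ generated by a \emph{single} graphon. When $W$ crosses $\tau_j$ its mean $\bar P_W=\alpha P_j+(1-\alpha)P_{j+1}$ is generated by the mixture $\alpha f_j+(1-\alpha)f_{j+1}$, which is piecewise Lipschitz with constant $\le L$ but whose pieces form the common refinement of the two partitions, so its minimal width can drop below $\delta$. I would resolve this either by verifying that the MNBS concentration only requires the \emph{effective} minimal width to dominate $n^{-1/2}\omega^{-1}\log n$ (enough near‑identical rows survive for the smoothing to concentrate), or by re‑deriving the neighbourhood‑smoothing bound directly for $\bar A_W$ around $\bar P_W$ using independence of the pooled Bernoulli entries, thereby bypassing the single‑graphon hypothesis; once uniform accuracy on mixed windows is secured, everything above is routine.
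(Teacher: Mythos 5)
Your overall architecture (a uniform good event for the MNBS estimates, then a deterministic comparison of $D(t,h)$ with its population version) matches the paper's, and your deterministic endgame is fine \emph{conditional on} the good event you posit. The problem is that the good event you need is strictly stronger than what Theorem \ref{MNBS_convergence} delivers, and you have correctly located but not closed that gap. You require $d_{2,\infty}(\tilde P_{t1,h},\bar P_{W_1(t)})\le r_n$ for \emph{every} $t$, including $t$ with $0<|t-\tau_j|<h$, where one window straddles a change-point and its mean is the mixture $\alpha P_j+(1-\alpha)P_{j+1}$. As you note, the mixture graphon $\alpha f_j+(1-\alpha)f_{j+1}$ lives on the common refinement of the two partitions, whose minimal gap can be arbitrarily small and in particular need not dominate $n^{-1/2}\omega^{-1}\log n$; so Theorem \ref{MNBS_convergence} does not apply, and neither of your two sketched repairs is carried out (the second one, ``bypass the single-graphon hypothesis using independence of the pooled Bernoulli entries,'' cannot work as stated, because the bias term of neighborhood smoothing genuinely uses the piecewise-Lipschitz structure to guarantee enough near-duplicate rows — independence alone controls only the variance term). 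As written, the proof is therefore incomplete at the step you yourself flag as the hard part.

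The instructive point is that this hard part is avoidable, and the paper's proof avoids it entirely. Because $h<D^*/2$, at a true change-point $\tau_j$ \emph{both} windows $[\tau_j-h+1,\tau_j]$ and $[\tau_j+1,\tau_j+h]$ are clean (each lies in a single segment), and at an $h$-flat point both windows share one $P$; the paper only ever invokes Theorem \ref{MNBS_convergence} at these two kinds of points. The behavior of $D(t,h)$ at the intermediate, straddling points is never needed: under the good event no flat point exceeds $\Delta_D$, so every selected local maximizer is within $h$ of some change-point, and every change-point produces a nearby above-threshold local maximizer (its $h$-neighborhood is flanked by flat points with $D<\Delta_D$). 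That already gives $\hat J=J$ and $|\hat\tau_j-\tau_j|<h$, which is all the theorem claims. Your tent construction, if the straddling-window concentration were established, would buy a strictly sharper localization $|\hat\tau_j-\tau_j|=O(r_n h/(\Delta^*)^{1/2})=o(h)$ — a genuinely stronger conclusion — but for the stated theorem it is an unnecessary detour that imports an unresolved technical obstacle. To fix your write-up with minimal effort, restrict the good event to flat points and true change-points and replace the tent argument by the local-maximizer bookkeeping above.
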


In particular, Theorem \ref{cp_detection} gives a sure coverage property of $\hat{\mathcal{J}}$ in the sense that the true change-point set $\mathcal{J}$ is asymptotically covered by $\hat{\mathcal{J}}\pm h$ such that $\max_{j=1,\ldots,J}|\hat{\tau}_j-\tau_j|< h$. If $h/T\to 0$, Theorem \ref{cp_detection} implies $\hat{J}=J$ and $\max_{j=1,\ldots,J}|\hat{\tau}_j/T-\tau_j/T|<h/T\to 0$ in probability, which further implies consistency of the relative locations of the estimated change-points $\hat{\mathcal{J}}$.

A remarkable phenomenon occurs if the minimum signal level $\Delta^*$ is strong enough such that $\Delta^*>(\log n)^{1/2+\delta_0}/n^{1/2}$. Under such situation, we can set $h=1$ and by the sure coverage property in Theorem \ref{cp_detection}, the proposed algorithm recovers the exact location of the true change-points $\mathcal{J}$ \textbf{\textit{without}} any error. This is in sharp contrast to the classical result for change-point detection under time series settings~\citep{Yao1987}, where the optimal error rate for estimated change-point location is shown to be $O_p(1)$. This unique property is due to the fact that MNBS provides accurate estimation of the link probability matrix by utilizing the network structure within each network via smoothing.

\begin{remark}[Choice of tuning parameters]
	There are four tuning parameters of the MNBS-based detection algorithm: local window size $h$, neighborhood size $B_0$ and threshold size $(D_0, \delta_0)$. 
	For the window size $h$, a smaller $h$ gives a better convergence rate of change-point estimation and is more likely to satisfy the constraint that $h<D^*/2$. On the other hand, a smaller $h$ puts a higher requirement on the detectable signal level $\Delta^*$ since we require $\Delta^*/\Delta_D>1$ with a smaller $h$ implying a larger threshold $\Delta_D=D_0(\log n)^{1/2+\delta_0}/(n^{1/2}h^{1/2})$. The only essential requirement on $h$ is that $h<D^*/2$, i.e., the local window should not cover two true change-points at the same time. In practice, as long as a lower bound of $D^*$ is known, $h$ can be specified accordingly. For most applications, we recommend setting $h=\sqrt{T}.$
 	For the choice of $B_0$ and $(D_0,\delta_0)$, note that Theorem \ref{cp_detection} holds for any $B_0,D_0,\delta_0>0$. Thus the choice of $B_0$ and $(D_0,\delta_0)$ is more of a practical matter and specific recommendations are provided in Section \ref{sec:tuning_parameter} of the supplementary material, where MNBS is found to give robust performance across a wide range of tuning parameters.
\end{remark}

\begin{remark}[Separation measure of signals]
	To our best knowledge, all existing literature that study change-points of dynamic networks use Frobenius norm as the separation measure between two link probability matrices. We instead use $d_{2,\infty}$ norm, which in general gives much weaker condition than the one using Frobenius norm. See examples in Section \ref{sec:simu_synthetic}.
\end{remark}


\section{Numerical studies}\label{sec:simu}
In this section, we conduct numerical experiments to examine the performance of the MNBS-based change-point detection algorithm for dynamic networks. For comparison, the graph-based nonparametric testing procedure in \cite{Chen2015} is also implemented~(via R-package \textit{gSeg} provided by \cite{Chen2015}) with type-I error $\alpha=0.05$. We refer to the two detection algorithms as MNBS and CZ. 

To operationalize MNBS, we need to specify the neighborhood $q=B_0(\log n)^{1/2}/(n^{1/2}h^{1/2})$ and the threshold $\Delta_D= D_0(\log n)^{1/2+\delta_0}/(n^{1/2}h^{1/2})$. In total, there are four tuning parameters, $h$ for the local window size, $B_0$ for the neighborhood size, and $D_0$ and $\delta_0$ for the threshold size. In Section \ref{sec:tuning_parameter} of the supplementary material, we conduct extensive numerical experiments and provide detailed recommendations for calibration of the tuning parameters. In short, MNBS provides robust and stable performance across a wide range of tuning parameters. We refer readers to Section \ref{sec:tuning_parameter} of the supplementary material for detailed study of the tuning parameters. In the following, we recommend setting $h=\sqrt{T}$, $B_0=3$, $\delta_0=0.1$ and $D_0=0.25$.

\subsection{Performance on synthetic networks}\label{sec:simu_synthetic}
In this section, we compare the performance of MNBS and CZ under various synthetic dynamic networks that contain single or multiple change-points. We first define seven different stochastic block models~(SBM-I to SBM-VII), which we then use to build various dynamic networks that exhibit different types of change behavior.

Denote $K_B$ as the number of blocks in an SBM, denote $M(i)$ as the membership of the $i$th node and denote $\Lambda$ as the connection probability matrix between different blocks. Define $M_1(i)=\text{I}(1\leq i\leq \lfloor n/3 \rfloor)+2\text{I}(\lfloor n/3 \rfloor+1\leq i\leq 2\lfloor n/3 \rfloor) + 3\text{I}(2\lfloor n/3 \rfloor+1\leq i\leq n)$, where $\text{I}(\cdot)$ denotes the indicator function and $\lfloor x \rfloor$ denotes the integer part of $x$. Define
\vspace{-2mm}
\begin{table}[h]
	{\small
	\begin{tabular}{lll}
    $\Lambda_1=\left[\begin{matrix}
    0.6 & 0.6-\Delta_{nT} & 0.3\\
    0.6-\Delta_{nT} &0.6 &0.3\\
    0.3 &0.3&0.6
    \end{matrix}\right]$, &
     
    $\Lambda_2=\left[\begin{matrix}
    0.6 +\Delta_{nT} & 0.6 & 0.3\\
    0.6 &0.6+\Delta_{nT} &0.3\\
    0.3 &0.3&0.6
    \end{matrix}\right]$, &
    
    $\Lambda_3=\left[\begin{matrix}
    0.6  & 0.3\\
    0.3 &0.6
    \end{matrix}\right]$, \\
     
    $\Lambda_4=\left[\begin{matrix}
    0.6 +\Delta_{nT} & 0.6-\Delta_{nT} & 0.3\\
    0.6-\Delta_{nT} &0.6+\Delta_{nT} &0.3\\
    0.3 &0.3&0.6
    \end{matrix}\right]$,&
    
    $\Lambda_5=\left[\begin{matrix}
    0.6  & 0.6-\Delta_{nT}\\
    0.6-\Delta_{nT} & 0.6
    \end{matrix}\right]$. &
	\end{tabular} 
    }
\end{table}
\vspace{-3mm}

The seven SBMs are defined as:
\begin{align*}
&\textsc{[SBM-I]} && \hspace{-5mm} K_B=2, M(i)=\text{I}(1\leq i\leq 2\lfloor n/3 \rfloor)+2\text{I}(2\lfloor n/3 \rfloor+1\leq i\leq n), \Lambda=\Lambda_3.\\
&\text{[SBM-II]}  && \hspace{-5mm} K_B=2, M(i)=\text{I}(1\leq i\leq 2\lfloor n(1-\Delta_{nT})/3\rfloor)+2I(2\lfloor n(1-\Delta_{nT})/3\rfloor+1\leq i\leq n),\\ &\Lambda=\Lambda_3.\\
&\text{[SBM-III]} && \hspace{-3mm} K_B=3, M(i)=M_1(i), \Lambda=\Lambda_1(\Delta_{nT}).\\
&\text{[SBM-IV]} && \hspace{-3mm} K_B=3, M(i)=M_1(i), \Lambda=\Lambda_2(\Delta_{nT}).\\
&\text{[SBM-V]} && \hspace{-3mm} K_B=3, M(i)=M_1(i), \Lambda=\Lambda_4(\Delta_{nT}).\\
&\text{[SBM-VI]} && \hspace{-3mm} K_B=2, M(i)=\text{I}(1\leq i\leq 2\lfloor n/3 \rfloor)+2I(2\lfloor n/3 \rfloor+1\leq i\leq n), \Lambda=\Lambda_5(\Delta_{nT}).\\
&\text{[SBM-VII]} && \hspace{-3mm} K_B=2, M(i)=\text{I}(1\leq i\leq 2\lfloor n/3 \rfloor-1)+2I(2\lfloor n/3 \rfloor\leq i\leq n), \Lambda=\Lambda_5(\Delta_{nT}).
\end{align*}
\textbf{Dynamic networks with change-points:} Based on SBM-I to SBM-VII, we design five dynamic stochastic block models~(DSBM) with single or multiple change-points.

\vspace{-0.2mm}
\noindent[DSBM-I] (community merging) For $t=1,\ldots, T/2$, $P_1=$ SBM-III with $\Delta_{nT}=1/n^{1/6}/T^{1/8}$. For $t=T/2+1,\ldots,T$, $P_2=$ SBM-I.

\vspace{-0.2mm}
\noindent[DSBM-II] (connectivity changing) For $t=1,\ldots, T/2$, $P_1=$ SBM-III. For $t=T/2+1,\ldots,T$, $P_2=$ SBM-V. Set $\Delta_{nT}=1/n^{1/6}/T^{1/8}$.

\vspace{-0.2mm}
\noindent[DSBM-III] (community switching) For $t=1,\ldots, T/2$, $P_1=$ SBM-VI. For $t=T/2+1,\ldots,T$, $P_2=$ SBM-VII. Set $\Delta_{nT}=1/n^{1/6}/T^{1/8}$.

\vspace{-0.2mm}
\noindent[MDSBM-I] For $t=1,\ldots, T/4$, $P_1=$ SBM-II with $\Delta_{nT}=2/n^{1/3}/T^{1/4}$. For $t=T/4+1,\ldots,T/2$, $P_2=$ SBM-I. For $t=T/2+1,\ldots,3T/4$, $P_3=$ SBM-III with $\Delta_{nT}=1/n^{1/6}/T^{1/8}$. For $t=3T/4+1,\ldots,T$, $P_4=$ SBM-V with $\Delta_{nT}=1/n^{1/6}/T^{1/8}$.

\vspace{-0.2mm}
\noindent[MDSBM-II] For $t=1,\ldots, T/5$, $P_1=$ SBM-II with $\Delta_{nT}=2/n^{1/3}/T^{1/4}$. For $t=T/5+1,\ldots,2T/5$, $P_2=$ SBM-I. For $t=2T/5+1,\ldots,3T/5$, $P_3=$ SBM-III with $\Delta_{nT}=1/n^{1/6}/T^{1/8}$. For $t=3T/5+1,\ldots,4T/5$, $P_4=$ SBM-I. For $t=4T/5+1,\ldots,T$, $P_5=$ SBM-IV with $\Delta_{nT}=1/n^{1/6}/T^{1/8}$.

DSBM-I,II,III are three dynamic networks with different types of change at a single change-point $\tau_1=T/2$. MDSBM-I consists of 3 change-points $(\tau_1,\tau_2,\tau_3)=(T/4,T/2,3T/4)$ with the types of change being community switching, community splitting and connectivity changing respectively. MDSBM-II consists of 4 change-points $(\tau_1,\tau_2,\tau_3, \tau_4)=(T/5,2T/5,3T/5,4T/5)$ with types of change being community switching, community splitting, community merging and community splitting. Three additional DSBMs~(DSBM-IV to DSBM-VI) and their simulation results can be found in Section \ref{sec:supp_simu_DSBM} of the supplementary material.

The signal level $\Delta^*$ of each DSBM/MDSBM is controlled by $\Delta_{nT}$ and decreases as sample size $(T,n)$ grows. The detailed signal level is summarized in Tables \ref{tab: signal_level} and \ref{tab: signal_level_mcp} of the supplementary material. Note that the signal level measured by (normalized) Frobenius norm $d_F$ can be at a considerably smaller order than the one measured by (normalized) $d_{2,\infty}$ norm, indicating $d_{2,\infty}$ norm is more sensitive to changes. This phenomenon is especially significant for DSBM-III, where $d_{2,\infty}^2(P_1,P_2)=1/(n^{1/3}T^{1/4})$ and $d_{F}^2(P_1,P_2)=2/(n^{4/3}T^{1/4})$, since only one node switches membership after the change-point, making the change very challenging to detect in terms of Frobenius norm.

\textbf{Simulation result:} We vary $n=100, 500, 1000$ and $T=100, 500$. For each combination of sample size $(T,n)$ and DSBM/MDSBM, we conduct the simulation 100 times. Note that for multiple change-point scenarios, we conduct change-point analysis on MDSBM-I when $T=100$ and perform the analysis on MSDBM-II when $T=500$.

To assess the accuracy of change-point estimation, we use the Boysen distance as suggested in \cite{Boysen2009} and \cite{Zou2014}. Specifically, denote $\widehat{\mathcal{J}}_{nT}$ as the estimated change-point set and $\mathcal{J}_{nT}$ as the true change-point set, we calculate the distance between $\widehat{\mathcal{J}}_{nT}$ and $\mathcal{J}_{nT}$ via $\xi(\widehat{\mathcal{J}}_{nT}||\mathcal{J}_{nT})=\sup_{b\in \mathcal{J}_{nT}}\inf_{a\in\widehat{\mathcal{J}}_{nT}}|a-b| \text{ and }\xi(\mathcal{J}_{nT}||\widehat{\mathcal{J}}_{nT})=\sup_{b\in\widehat{\mathcal{J}}_{nT}}\inf_{a\in\mathcal{J}_{nT}}|a-b|$, which quantify the under-segmentation error and over-segmentation error of the estimated change-point set $\widehat{\mathcal{J}}_{nT}$, respectively. When $\mathcal{J}_{nT}\not= \emptyset$ and $\widehat{\mathcal{J}}_{nT}=\emptyset$, we define $\xi(\widehat{\mathcal{J}}_{nT}||\mathcal{J}_{nT})=\max_{\tau\in\mathcal{J}_{nT}}\tau$ and $\xi(\mathcal{J}_{nT}||\widehat{\mathcal{J}}_{nT})=\textbf{-}$.

The performance of MNBS and CZ are summarized in Table \ref{tab: onechangepoint}, where we report the average number of estimated change-points $\hat{J}$ and the average Boysen distance $\xi_1=\xi(\widehat{\mathcal{J}}_{nT}||\mathcal{J}_{nT})$ for under-segmentation error and $\xi_2=\xi(\mathcal{J}_{nT}||\widehat{\mathcal{J}}_{nT})$ for over-segmentation error over 100 runs.

The simulation results clearly indicate the superior performance of MNBS over CZ for all simulation scenarios in terms of both the number and accuracy of change-point estimation. For DSBM-I, CZ suffers from false positive detection as indicated by inflated over-segmentation error $\xi_2$ when $T=500$. For DSBM-II, CZ loses its power almost completely with $\hat{J}\approx 0$ when $T=100$ and produces false positive detection when $T=500$.

For DSBM-III, CZ suffers from low power especially for large $n$ due to the weak signal level of the change. For MDSBM-I, CZ underestimates the number of change-points due to the loss of power for detecting the connectivity changing at $\tau_3$, while for MDSBM-II, CZ suffers from false positive detection as indicated by inflated $\hat{J}$. In contrast, MNBS provides robust performance across all scenarios with more accurate estimated number of change-points $\hat{J}$ and smaller Boysen distances $\xi_1,\xi_2$ for both under and over-segmentation errors.

 
\vspace{-4mm}
\begin{table}[h]
	\centering
	\caption{Average number of estimated change-points $\hat{J}$ and Boysen distances $\xi_1$, $\xi_2$ by MNBS and CZ under single change-point and multiple change-point scenarios.}
	\label{tab: onechangepoint}
    {\small
	\begin{tabular}{lrrrrrrrrrrrr}
		\hline \hline
		MNBS & \multicolumn{3}{c}{DSBM-I} & \multicolumn{3}{c}{DSBM-II} & \multicolumn{3}{c}{DSBM-III} & \multicolumn{3}{c}{MDSBM-I/II} \\\hline
		$(T,n)$ &  $\hat{J}$ & $\xi_1$ & $\xi_2$ & $\hat{J}$ & $\xi_1$ & $\xi_2$ & $\hat{J}$ & $\xi_1$ & $\xi_2$ & $\hat{J}$ & $\xi_1$ & $\xi_2$  \\\hline
		$(100,100)$ & 1.00 & 0.32 & 0.32 & 1.00 & 0.16 & 0.16 & 1.01 & 0.86 & 1.25 & 3.00 & 0.47 & 0.47 \\ 
		$(100,500)$ & 1.00 & 0.09 & 0.09 & 1.00 & 0.03 & 0.03 & 1.00 & 0.65 & 0.65 & 3.00 & 0.10 & 0.10 \\ 
		$(100,1000)$ & 1.00 & 0.07 & 0.07 & 1.00 & 0.02 & 0.02  & 1.00 & 0.82 & 0.82   & 3.00 & 0.01 & 0.01\\
		$(500,100)$ & 1.00 & 1.46 & 1.46 & 1.00 & 1.26 & 1.26 & 1.02 & 2.47 & 3.53  & 4.00 & 2.91 & 2.91\\ 
		$(500,500)$ & 1.00 & 0.67 & 0.67 & 1.00 & 0.38 & 0.38 & 1.00 & 1.70 & 1.70 & 4.00 & 1.34 & 1.34 \\ 
		$(500,1000)$ & 1.00 & 0.45 & 0.45 & 1.00 & 0.19 & 0.19  & 1.00 & 2.11 & 2.11  & 4.00 & 0.42 & 0.42\\  \hline
		CZ & \multicolumn{3}{c}{DSBM-I} & \multicolumn{3}{c}{DSBM-II} & \multicolumn{3}{c}{DSBM-III} & \multicolumn{3}{c}{MDSBM-I/II} \\\hline
		$(T,n)$ &  $\hat{J}$ & $\xi_1$ & $\xi_2$ & $\hat{J}$ & $\xi_1$ & $\xi_2$ & $\hat{J}$ & $\xi_1$ & $\xi_2$ & $\hat{J}$ & $\xi_1$ & $\xi_2$ \\\hline
		$(100,100)$ & 1.12 & 0.00 & 3.38 & 0.03 & 49.49 & 30.50 & 0.61 & 26.66 & 9.26 & 2.31 & 26.64 & 4.29 \\ 
		$(100,500)$ & 1.11 & 0.00 & 2.66 & 0.00 & 50.00 & \textbf{-}& 0.24 & 40.63 & 8.55 & 2.19 & 26.56 & 3.31 \\ 
		$(100,1000)$ & 1.08 & 0.00 & 1.88 &  0.00 & 50.00 & \textbf{-}  & 0.10 & 45.65 & 6.50 & 2.26 & 26.76 & 3.94 \\ 
		$(500,100)$ & 1.15 & 0.00 & 22.75 & 2.36 & 9.60 & 112.24 & 1.09 & 20.11 & 24.59 & 6.53 & 9.51 & 45.66 \\ 
		$(500,500)$ & 1.21 & 0.00 & 21.59 & 2.46 & 9.91 & 117.94 & 0.64 & 122.70 & 40.88 & 8.14 & 10.94 & 41.78 \\
		$(500,1000)$ & 1.13 & 0.00 & 15.62 & 2.52 & 11.14 & 115.20 & 0.30 & 187.72 & 37.50 & 8.19 & 10.25 & 43.16 \\  \hline\hline	
	\end{tabular}
    }
\end{table}

\subsection{Real data analysis}
In this section, we apply MNBS and CZ to perform change-point detection for the MIT proximity network data. The data is collected through an experiment conducted by the MIT Media Laboratory during the 2004-2005 academic year \cite{Eagle2009}, where 90 MIT students and staff were monitored by means of their smart phone. The Bluetooth data gives a measure of the proximity between two subjects and can be used as to construct a link between them. Based on the recorded time of the Bluetooth scan, we construct a daily-frequency dynamic network among 90 subjects by grouping the links per day. The network extracted based on the Bluetooth scan is relatively dense~(see Figure 2 in the supplementary material).

There are in total 348 days from 07/19/2004 to 07/14/2005. The recommended $h$ is $h=\lfloor \sqrt{348} \rfloor=18$. For better interpretation, we set $h=14$, which corresponds to 2 weeks. CZ detects 18 change-points while MNBS gives 10 change-points. The detailed result is reported in Table \ref{tab: MIT_network}. The two algorithms give similar results for change-point locations. Notably CZ labels more change-points around the beginning and ending of the time period, which may be suspected as false positives. 
For robustness check, we rerun the analysis for MNBS and CZ with $h=7$, which corresponds to 1 week. CZ detects 28 change-points while MNBS detects 11 change-points, which further indicates the robustness of MNBS~(This result is reported in Section \ref{sec:supp_data} of the supplementary material). In Figure \ref{fig: MNBS_MIT}~(top) of the supplementary material, we plot the sequence of scan statistics $D(t,h)$ generated by MNBS, along with its $h$-local maximizers $\mathcal{LM}$ and estimated change-points $\widehat{\mathcal{J}}$. Figure \ref{fig: MNBS_MIT}(bottom) plots the time series of total links of the dynamic network for illustration purposes, where MNBS is seen to provide an approximately piecewise constant segmentation for the series.

\vspace{-4mm}
\begin{table}[h]
	\centering
	\caption{Estimated change-points by MNBS and CZ for MIT proximity network data.}
	\label{tab: MIT_network}
	{\small
	\begin{tabular}{lrrrrrrrrrrrrrrrrrr}
		\hline \hline
		CZ & 16 & 34 & 51 & 65 & 81 & 104 & 127 & 145 & 167 & 192 & 205 & 223 & 237 & 254 & 273 & 289 & 314 & 333 \\
		MNBS & - & - & 49 & 65 & 80 & 100 & - & 149 & - & 196 & - & - & 231 & 258 & 273 & 289 & - & - \\ \hline\hline
	\end{tabular}}
\end{table}

\section{Conclusion}\label{sec:conclusion}
We propose a model-free and scalable multiple change-point detection procedure for dynamic networks by effectively utilizing the network structure for inference. Moreover, the proposed approach is proven to be consistent and delivers robust performance across various synthetic and real data settings. One can leverage the insights gained from our work for other learning tasks such as performing hypothesis tests on populations of networks. One potential weakness of MNBS-based change-point detection is that it is currently not adaptive to the sparsity of the network, as graphon estimation by MNBS is not adaptive to sparsity. We expect to be able to build the sparsity parameter $\rho_n$ into our procedure by assuming the graphon function $f(x,y)=\rho_n f_0(x,y)$ with piecewise Lipschitz condition on $f_0$. This potentially allows to include $\rho_n$ into the error bound of MNBS (with Frobenius norm normalized by $\rho_n$) and subsequently to adjust detection threshold~(thus minimum detectable signal strength) depending on $\rho_n$.

%
\subsubsection*{Acknowledgments}
\vspace{-.5em}
Lizhen Lin would like to thank Soumendu Mukherjee for very useful discussions. She acknowledges the support from NSF grants IIS 1663870, DMS Career 1654579 and a DARPA grant N66001-17-1-4041.

\bibliographystyle{plain}
\bibliography{ref-LL-AA-networks}

\section{Supplementary material: additional numerical studies}\label{sec:numerical}
Section \ref{sec:SBMdefinition} defines additional SBMs/graphons used for numerical study of tuning parameter calibration and for specifying additional dynamic networks for change-point analysis. Section \ref{sec:tuning_parameter} provides detailed discussion and recommendation on the choices of tuning parameters for MNBS. Section \ref{sec:supp_simu_DSBM} gives additional numerical experiments for comparing the performance of MNBS and CZ on change-point detection. Section \ref{sec:supp_data} provides additional results on real data analysis.

\subsection{Additional SBMs and graphons}\label{sec:SBMdefinition}
In addition to the seven SBMs~(SBM-I to SBM-VII) defined in Section \ref{sec:simu_synthetic} of the main text, we further define four more SBMs~(SBM-VIII to SBM-XI) and three graphons~(Graphon-I to Graphon-III), which are later used for numerical study of tuning parameter calibration and for specifying additional dynamic networks for change-point analysis. Again, denote $K_B$ as the number of blocks/communities in an SBM, denote $M(i)$ as the membership of the $i$th node and denote $\Lambda$ as the connection probability matrix between different blocks. The three graphons are borrowed from \cite{nbhd}.


\noindent[SBM-VIII] $K_B=2$, $M(i)=\text{I}(1\leq i\leq \lfloor n^{3/4} \rfloor)+2\text{I}(\lfloor n^{3/4} \rfloor+1\leq i\leq n)$,\\
$\Lambda=\left[\begin{matrix}
0.6 & 0.3\\
0.3 &0.6
\end{matrix}\right]$.


\noindent[SBM-IX] $K_B=2$, $M(i)=\text{I}(1\leq i\leq \lfloor n^{3/4} \rfloor)+2\text{I}(\lfloor n^{3/4} \rfloor+1\leq i\leq n)$,\\
$\Lambda=\left[\begin{matrix}
0.6-\Delta_{nT} & 0.3\\
0.3 &0.6
\end{matrix}\right]$.




\noindent[SBM-X] $K_B=2$, $M(i)=\text{I}(1\leq i\leq \lfloor n/2 \rfloor)+2\text{I}(\lfloor n/2 \rfloor+1\leq i\leq n)$,\\
$\Lambda=\left[\begin{matrix}
0.6 & 0.6-\Delta_{nT}\\
0.6-\Delta_{nT} &0.6
\end{matrix}\right]$.

\noindent[SBM-XI] $K_B=2$, $M(i)=\text{I}(i \text{ is odd})+2\text{I}(i \text{ is even})$,\\
$\Lambda=\left[\begin{matrix}
0.6 & 0.6-\Delta_{nT}\\
0.6-\Delta_{nT} &0.6
\end{matrix}\right]$.



\noindent[Graphon-I] $f(u,v)=k/(K_B+1)$ if $(u,v)\in((k-1)/K_B,k/K_B),$ $f(u,v)=0.3/(K_B+1)$ otherwise; $K_B=\lfloor \log n\rfloor.$

\noindent[Graphon-II] $f(u,v)=\sin \{5\pi(u+v-1)+1 \}/2+0.5$.

\noindent[Graphon-III] $f(u,v)=(u^2+v^2)/3\cos\{1/(u^2+v^2)\} +0.15.$

\subsection{Calibration of tuning parameters}\label{sec:tuning_parameter}
In this section, we discuss and recommend the choices of tuning parameters for MNBS. To operationalize the MNBS-based detection procedure, we need to specify the neighborhood $q=B_0(\log n)^{1/2}/(n^{1/2}h^{1/2})$ and the threshold $\Delta_D=\Delta_D(h,n)= D_0(\log n)^{1/2+\delta_0}/(n^{1/2}h^{1/2})$. In total, there are four tuning parameters, $B_0$ for the neighborhood size, $h$ for the local window size, and $D_0$ and $\delta_0$ for the threshold size.

By Theorem \ref{cp_detection}, to achieve consistent detection of true change-points, the minimum signal level $\Delta^*$ needs to be larger than the threshold $\Delta_D$. Thus, in terms of minimum detectable signal, we prefer a larger $h$ and a smaller $\delta_0$, so that asymptotically we can achieve a larger detectable region. On the other hand, to achieve a tighter confidence region of estimated change-point locations, we prefer a smaller $h$, and we prefer a larger $\delta_0$ since it helps reduce false positives under small sample sizes. For finite sample, we recommend to set $\delta_0=0.1$ and $h=\sqrt{T}$, which makes the weakest detectable signal by MNBS to be of order $O\left(\frac{(\log n)^{0.6}}{n^{1/2}T^{1/4}}\right)$.

For the neighborhood size $B_0$, \cite{nbhd} demonstrates that the performance of the neighborhood-based estimation is robust to the choice of $B_0$ in the range of $[e^{-1}, e^{2}]$. Following \cite{nbhd}, we recommend to set $B_0=1, 2 \text{ or } 3$. Note that the number of neighbors in MNBS is $B_0(n\log n/h)^{1/2}$. To control the variance of MNBS, we suggest choosing a $B_0$ such that $B_0(n\log n/h)^{1/2}>10.$ For all the following simulations, we set $B_0=3$ where $B_0=1,2$ give similar numerical performance.

To study the sensitivity of $D_0$ w.r.t. false positives under finite sample, we simulate dynamic networks $\{A^{(t)}\}_{t=1}^T$ with \textbf{no} change-points from SBM-III,VIII,XI, and Graphons-I,II,III. We vary $D_0$ (thus the threshold $\Delta_D$) and examine the performance of MNBS. We vary the sample size at $n=100, 500, 1000$ and $T=100, 500$. 

For each combination of the sample size $(T,n)$ and the network model, we conduct the simulation 100 times. Figure \ref{fig: MNBS_nochange} reports the curves of average $|\hat{J}_{nT}-J_{nT}|$, which is the difference between estimated number change-points and true number of change-points, versus $D_0$ under the six different network models. Note that $J_{nT}\equiv 0$, thus $|\hat{J}_{nT}-J_{nT}|=\hat{J}_{nT}$ and the discrepancy represents the significance of false positive detection. As can be seen clearly, under various models and various sample sizes, the region with $D_0\geq 0.25$ controls the false positive reasonably well for MNBS. Thus, in practice we recommend setting $D_0=0.25$ for largest power. Note that out of the four stochastic block models~(SBM-III,VIII,XI and Graphon-I), the most challenging case is SBM-VIII, which may be due to its imbalanced block size.

\begin{figure}[h]
	\centering
	\begin{subfigure}[t]{0.32\textwidth}
		\centering
		\includegraphics[height=1.82in]{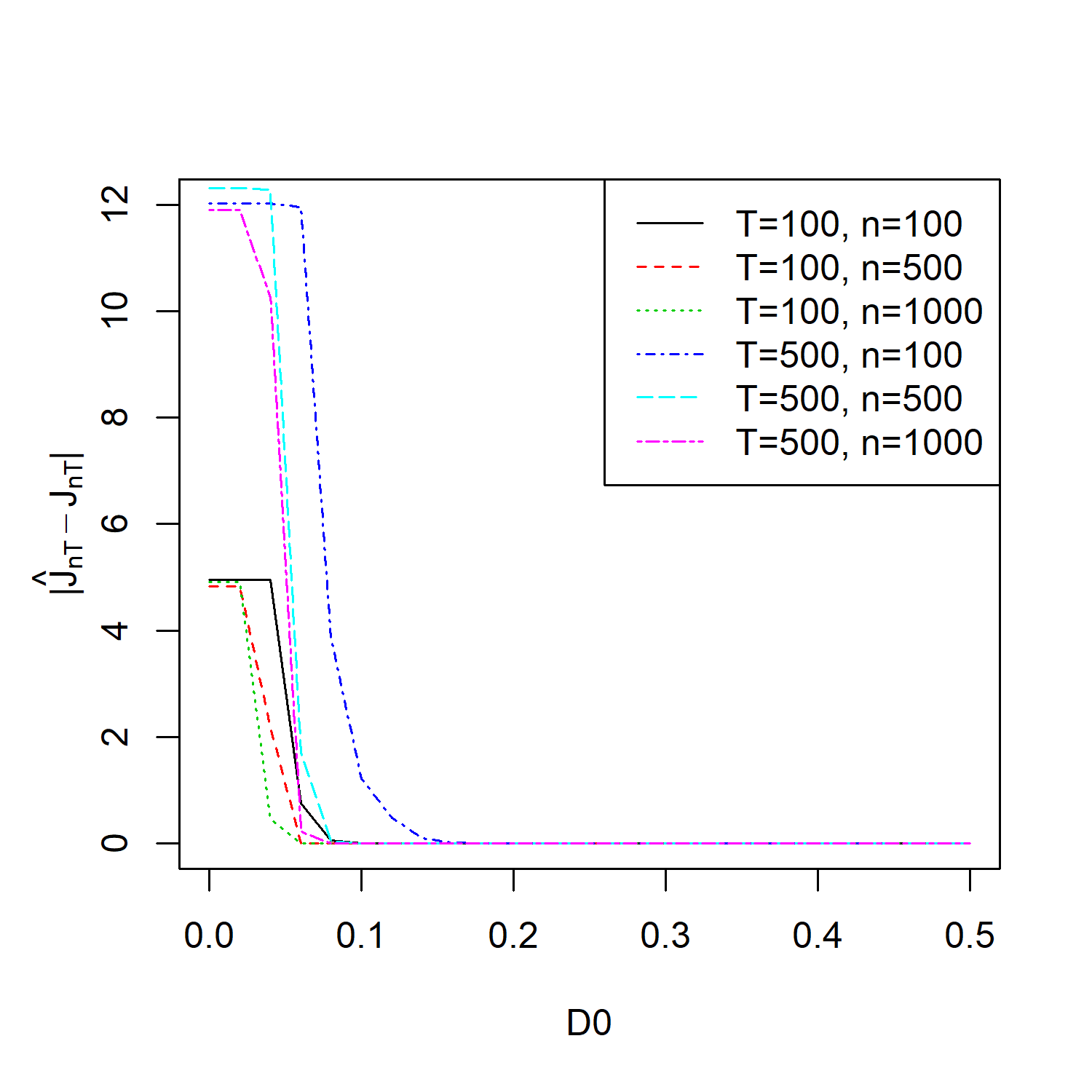}
		\caption{SBM-III}
	\end{subfigure}
	~ 
	\begin{subfigure}[t]{0.32\textwidth}
		\centering
		\includegraphics[height=1.82in]{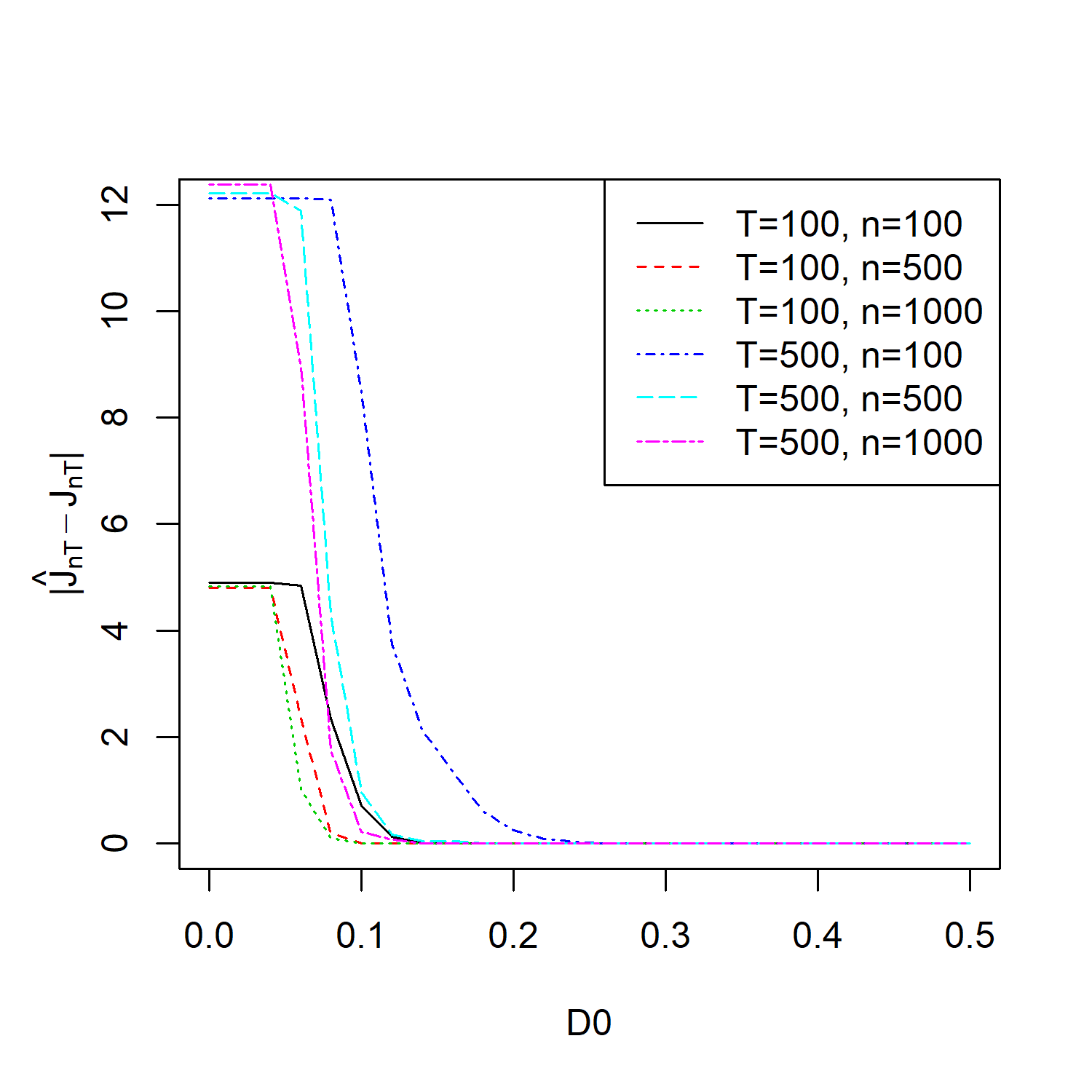}
		\caption{SBM-VIII}
	\end{subfigure}%
	~ 
	\begin{subfigure}[t]{0.32\textwidth}
		\centering
		\includegraphics[height=1.82in]{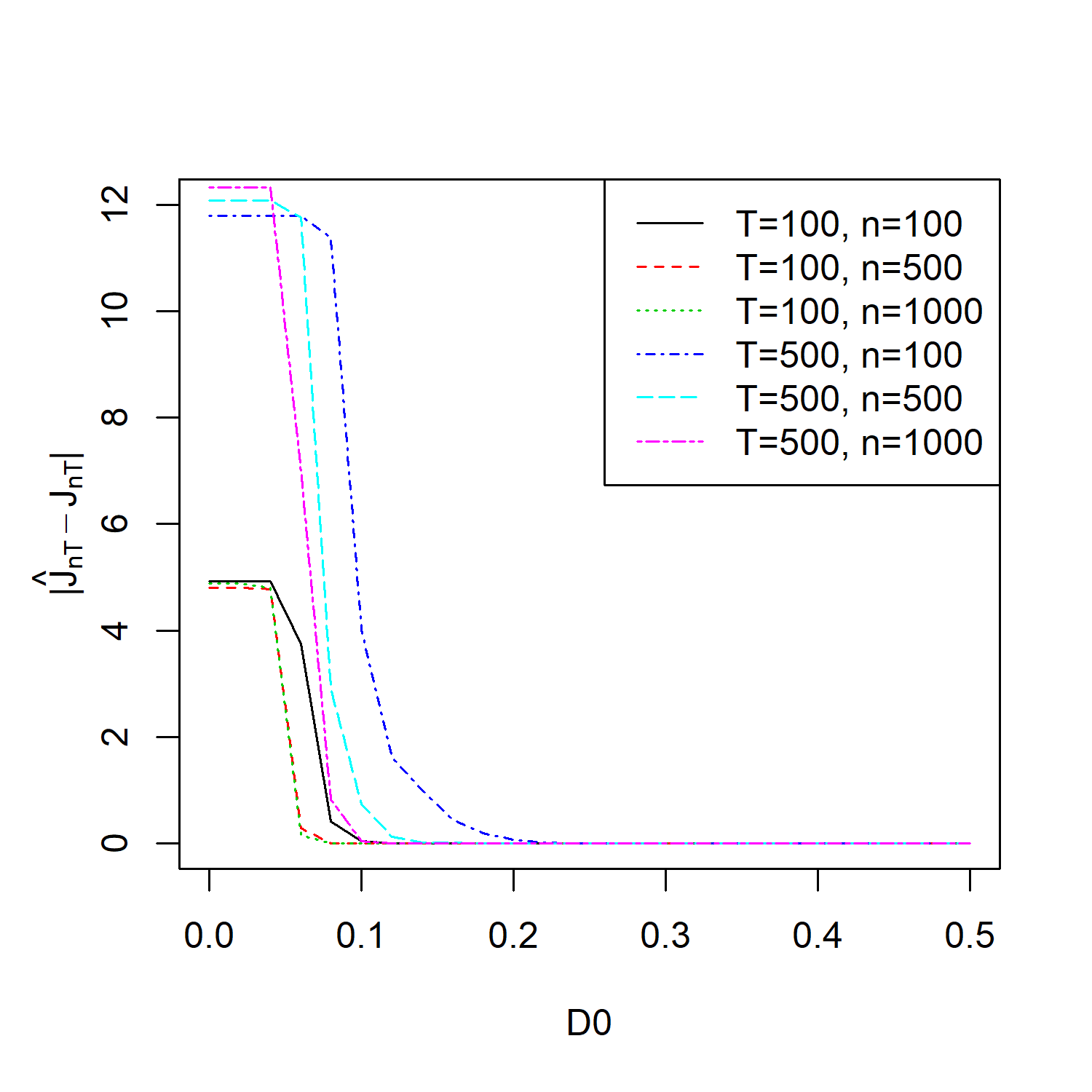}
		\caption{SBM-XI}
	\end{subfigure}
	
	\begin{subfigure}[t]{0.32\textwidth}
		\centering
		\includegraphics[height=1.82in]{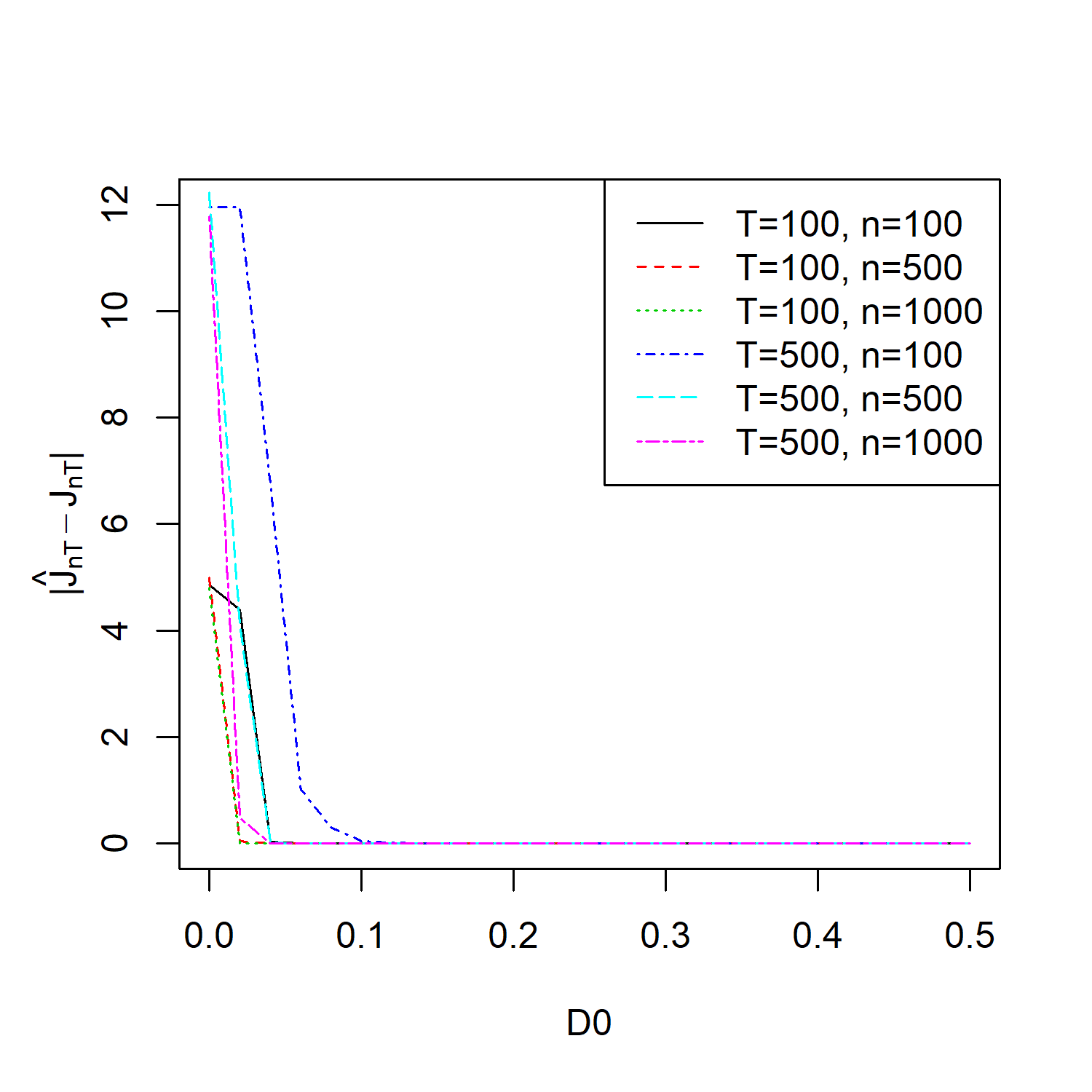}
		\caption{Graphon-I}
	\end{subfigure}
	~
	\begin{subfigure}[t]{0.32\textwidth}
		\centering
		\includegraphics[height=1.82in]{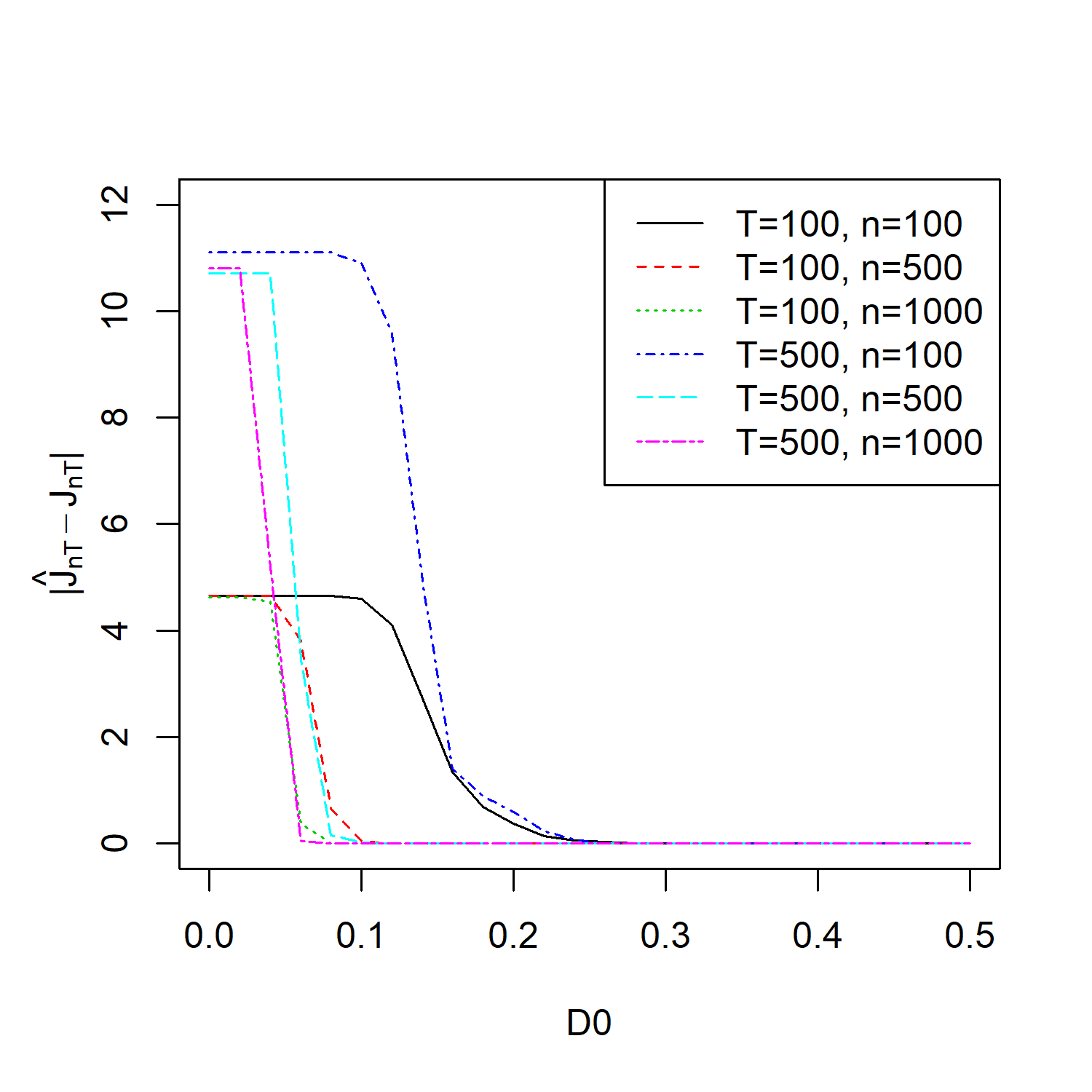}
		\caption{Graphon-II}
	\end{subfigure}
	~
	\begin{subfigure}[t]{0.32\textwidth}
		\centering
		\includegraphics[height=1.82in]{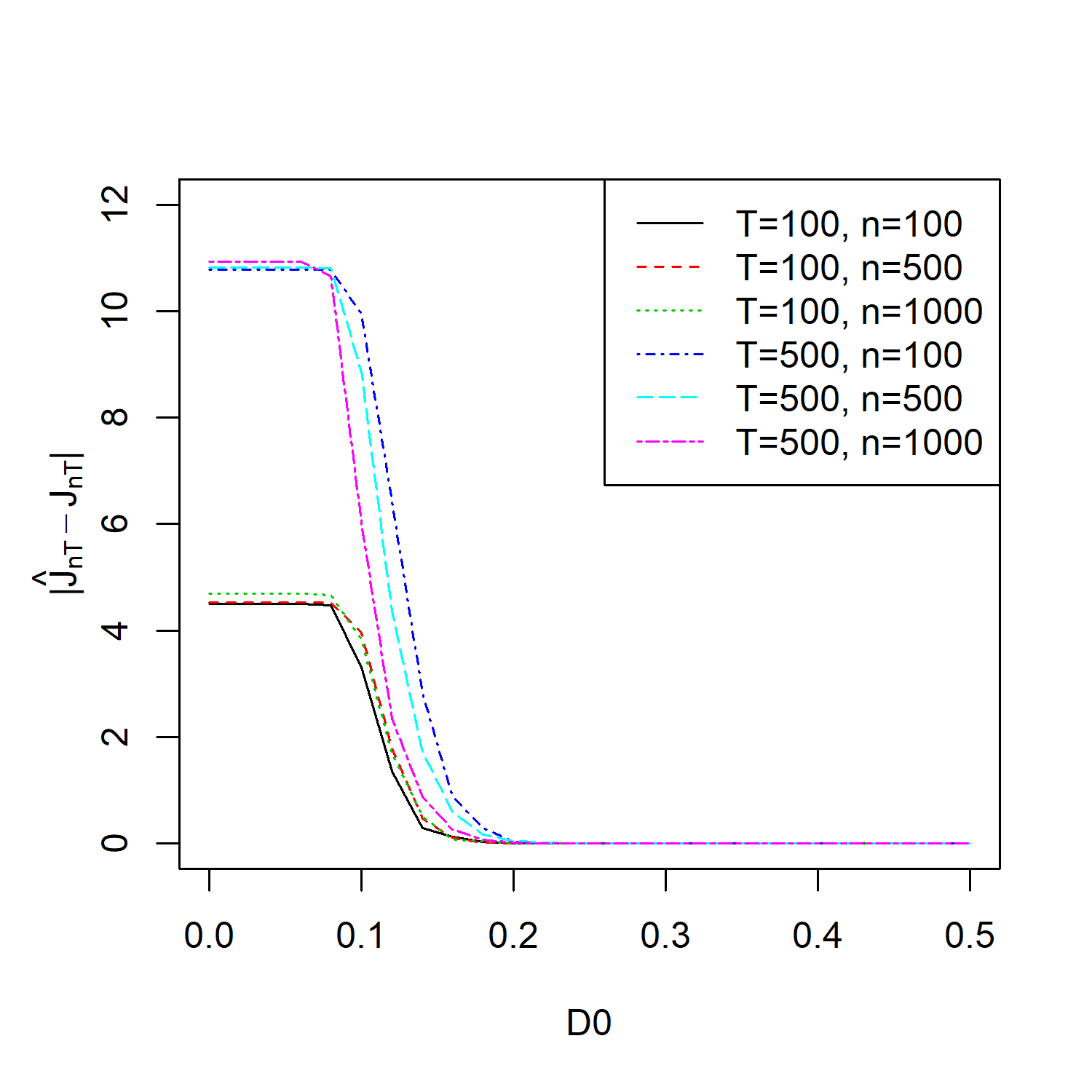}
		\caption{Graphon-III}
	\end{subfigure}
	\caption{Average number of estimated change-points $\hat{J}_{nT}$ by MNBS w.r.t. $D_0$ for six dynamic networks under no change-point scenarios.}
	\label{fig: MNBS_nochange}
\end{figure}

The average number of estimated change-points by MNBS with $D_0=0.25$ and by CZ with $\alpha=0.05$ are reported in Table \ref{tab: nochangepoint}. As can be seen, the performance of MNBS is reasonably well and improves as $n$ increases, with some false positives when $T=500, n=100$, i.e. the large $T$ and small $n$ case. The reason is due to the relatively large variance of MNBS incurred by the small size of neighborhood. As for CZ, the empirical type-I error is roughly controlled at the target level $0.05$ for most cases while experiencing some inflated levels under $T=500, n=100$.

\begin{table}[h]
	\centering
	\caption{Average number of estimated change-points by MNBS and CZ for six dynamic networks under no change-point scenarios.}
	\label{tab: nochangepoint}
	\begin{tabular}{lrrrrrr}
		\hline \hline
		MNBS & SBM-III & SBM-VIII & SBM-XI & Graphon-I & Graphon-II & Graphon-III\\\hline
		$T=100,n=100$ & 0.00 & 0.00 & 0.00 & 0.00 & 0.03 & 0.00 \\ 
		$T=100,n=500$ & 0.00 & 0.00 & 0.00 & 0.00 & 0.00 & 0.00 \\ 
		$T=100,n=1000$ & 0.00 & 0.00 & 0.00 & 0.00 & 0.00 & 0.00 \\ 
		$T=500,n=100$ & 0.00 & 0.02 & 0.00 & 0.00 & 0.03 & 0.00  \\ 
		$T=500,n=500$ & 0.00 & 0.00 & 0.00 & 0.00 & 0.00 & 0.00  \\ 
		$T=500,n=1000$ & 0.00 & 0.00 & 0.00 & 0.00 & 0.00 & 0.00 \\ \hline
		CZ  & SBM-III & SBM-VIII & SBM-XI & Graphon-I & Graphon-II & Graphon-III\\\hline
		$T=100,n=100$ & 0.05 & 0.01 & 0.05 & 0.05 & 0.03 & 0.04 \\ 
		$T=100,n=500$ & 0.01 & 0.08 & 0.02 & 0.05 & 0.05 & 0.03 \\ 
		$T=100,n=1000$ & 0.09 & 0.06 & 0.07 & 0.03 & 0.02 & 0.04 \\ 
		$T=500,n=100$ & 0.07 & 0.07 & 0.04 & 0.06 & 0.06 & 0.11 \\ 
		$T=500,n=500$ & 0.03 & 0.02 & 0.07 & 0.02 & 0.06 & 0.02 \\ 
		$T=500,n=1000$ & 0.07 & 0.03 & 0.02 & 0.03 & 0.05 & 0.04 \\ \hline \hline
	\end{tabular}
\end{table}

The numerical experiments demonstrate that MNBS provides robust and stable performance across a wide range of tuning parameters. To summarize, in practice, we recommend setting $h=\sqrt{T}$, $\delta_0=0.1$, $B_0\in \{1,2,3\}$ and $D_0\geq 0.25$. 

\subsection{Additional results on synthetic networks}\label{sec:supp_simu_DSBM}
In this section, we provide additional numerical experiments for comparing the performance of MNBS and CZ. Specifically, we design three additional dynamic stochastic block models~(DSBM-IV to DSBM-VI) and conduct change-point detection analysis.





\noindent[DSBM-IV] (community switching) For $t=1,\ldots, T/2$, $P_1=$ SBM-I. For $t=T/2+1,\ldots,T$, $P_2=$ SBM-II with $\Delta_{nT}=2/T^{1/4}/n^{1/3}$.

\noindent[DSBM-V] (connectivity changing) For $t=1,\ldots, T/2$, $P_1=$ SBM-VIII. For $t=T/2+1,\ldots,T$, $P_2=$ SBM-IX with $\Delta_{nT}=1/T^{1/8}$.

\noindent[DSBM-VI] (community switching) For $t=1,\ldots, T/2$, $P_1=$ SBM-X. For $t=T/2+1,\ldots,T$, $P_2=$ SBM-XI. Set $\Delta_{nT}=1/n^{1/6}/T^{1/8}$.

\textbf{Signal levels:} The signal levels $\Delta^*$ of DSBM-I to DSBM-VI are summarized in Table \ref{tab: signal_level}. The signal levels of MDSBM-I and MDSBM-II are summarized in Table \ref{tab: signal_level_mcp}. Again, we define the normalized $d_{2,\infty}$ norm as $d_{2,\infty}(P,Q)=n^{-1/2}\|P-Q\|_{2,\infty}=\max_i n^{-1/2}\|P_{i\cdot}-Q_{i\cdot}\|_2$ and the normalized Frobenius norm as $d_F(P,Q)=n^{-1}\|P-Q\|_F.$ Note that in general, the signal level measured by Frobenius norm $d_F$ is of considerably smaller order than the one measured by $d_{2,\infty}$ norm, indicating that $d_{2,\infty}$ norm is more sensitive to changes. This phenomenon is especially significant for DSBM-III, where only one node switches membership after the change-point, making the change very challenging to detect in terms of Frobenius norm.

\begin{table}[h]
	\centering
	\caption{Signal levels for six DSBMs under single change-point scenarios by $d_{2,\infty}^2$ and squared normalized Frobenius norm $d_{F}^2$ of $P_1-P_2$.}
	\label{tab: signal_level}
	\begin{tabular}{lllllll}
		\hline \hline
		$P_1-P_2$ & DSBM-I & DSBM-II & DSBM-III & DSBM-IV & DSBM-V & DSBM-VI	\\  \hline
		$d_{2,\infty}^2$ 	& $\dfrac{1}{3T^{1/4}n^{1/3}}$ & $\dfrac{1}{3T^{1/4}n^{1/3}}$ & $\dfrac{1}{n^{1/3}T^{1/4}}$ & $0.3^2$ & $\dfrac{1}{T^{1/4}n^{1/4}}$ & $\dfrac{1}{2T^{1/4}n^{1/3}}$ \\
		$d_{F}^2$ 	 	& $\dfrac{2}{9T^{1/4}n^{1/3}}$ & $\dfrac{2}{9T^{1/4}n^{1/3}}$ & $\dfrac{2}{n^{4/3}T^{1/4}}$ & $\dfrac{8\cdot0.3^2}{3T^{1/4}n^{1/3}}$ & $\dfrac{1}{T^{1/4}n^{1/2}}$ & $\dfrac{1}{2T^{1/4}n^{1/3}}$ \\
		\hline\hline
	\end{tabular}
\end{table}

\begin{table}[h]
	\centering
	\caption{Signal levels for two MDSBMs by $d_{2,\infty}^2$ and squared normalized Frobenius norm $d_{F}^2$ of $P_i-P_{i+1}, i=1,\ldots, J.$}
	\label{tab: signal_level_mcp}
	\begin{tabular}{lllll}
		\hline \hline
		& $\tau_1$ & $\tau_2$ & $\tau_3$ & $\tau_4$ 	\\  \hline
		MDSBM-I~($d_{2,\infty}^2$) 	&  $0.3^2$  & $\dfrac{1}{3T^{1/4}n^{1/3}}$ & $\dfrac{1}{3T^{1/4}n^{1/3}}$  & \textbf{-}\\
		MDSBM-I~($d_{F}^2$) 	& $\dfrac{8\cdot0.3^2}{3T^{1/4}n^{1/3}}$ & $\dfrac{2}{9T^{1/4}n^{1/3}}$ & $\dfrac{2}{9T^{1/4}n^{1/3}}$ & \textbf{-} \\
		MDSBM-II~($d_{2,\infty}^2$) 	&  $0.3^2$  & $\dfrac{1}{3T^{1/4}n^{1/3}}$ & $\dfrac{1}{3T^{1/4}n^{1/3}}$  &  $\dfrac{1}{3T^{1/4}n^{1/3}}$\\
		MDSBM-II~($d_{F}^2$) 	& $\dfrac{8\cdot0.3^2}{3T^{1/4}n^{1/3}}$ & $\dfrac{2}{9T^{1/4}n^{1/3}}$ & $\dfrac{2}{9T^{1/4}n^{1/3}}$ &  $\dfrac{2}{9T^{1/4}n^{1/3}}$\\
		\hline\hline
	\end{tabular}
\end{table}

\textbf{Simulation setting and result:} We set the sample size as $n=100, 500, 1000$ and $T=100, 500$. For each combination of sample size $(T,n)$ and DSBMs, we conduct the simulation 100 times. The performance of MNBS and CZ for DSBM-IV to DSBM-VI are reported in Table \ref{tab: onechangepoint}, where we report the average number of estimated change-points $\hat{J}$ and the average Boysen distance $\xi_1=\xi(\widehat{\mathcal{J}}_{nT}||\mathcal{J}_{nT})$ for under-segmentation error and $\xi_2=\xi(\mathcal{J}_{nT}||\widehat{\mathcal{J}}_{nT})$ for over-segmentation error. 
In general, both MNBS and CZ provide satisfactory performance for DSBM-IV to DSBM-VI, while MNBS offers superior performance with more accurate estimated number of change-points $\hat{J}$ and smaller Boysen distances $\xi_1,\xi_2$ for both under and over-segmentation errors.

\begin{table}[h]
	\centering
	\caption{Average number of estimated change-points $\hat{J}$ and Boysen distances $\xi_1$, $\xi_2$ by MNBS and CZ for DSBM-IV to DSBM-VI under single change-point scenarios.}
	\label{tab: onechangepoint}
	\begin{tabular}{lrrrrrrrrr}
		\hline \hline
		MNBS & \multicolumn{3}{c}{DSBM-IV} & \multicolumn{3}{c}{DSBM-V} & \multicolumn{3}{c}{DSBM-VI}  \\\hline
		&  $\hat{J}$ & $\xi_1$ & $\xi_2$ & $\hat{J}$ & $\xi_1$ & $\xi_2$ & $\hat{J}$ & $\xi_1$ & $\xi_2$ \\\hline
		$T=100,n=100$ & 1.00 & 0.12 & 0.12 & 1.00 & 0.00 & 0.00 & 1.00 & 0.02 & 0.02 \\ 
		$T=100,n=500$ & 1.00 & 0.01 & 0.01 & 1.00 & 0.00 & 0.00 & 1.00 & 0.00 & 0.00  \\ 
		$T=100,n=1000$ & 1.00 & 0.00 & 0.00 & 1.00 & 0.00 & 0.00 & 1.00 & 0.00 & 0.00  \\
		$T=500,n=100$ & 1.00 & 0.38 & 0.38 & 1.07 & 0.17 & 6.64 &  1.00 & 0.49 & 0.49 \\ 
		$T=500,n=500$ &  1.00 & 0.07 & 0.07 & 1.00 & 0.01 & 0.01 & 1.00 & 0.09 & 0.09  \\ 
		$T=500,n=1000$ & 1.00 & 0.02 & 0.02 & 1.00 & 0.00 & 0.00 & 1.00 & 0.09 & 0.09 \\  \hline
		CZ & \multicolumn{3}{c}{DSBM-IV} & \multicolumn{3}{c}{DSBM-V} & \multicolumn{3}{c}{DSBM-VI}  \\\hline
		&  $\hat{J}$ & $\xi_1$ & $\xi_2$ & $\hat{J}$ & $\xi_1$ & $\xi_2$ & $\hat{J}$ & $\xi_1$ & $\xi_2$ \\\hline
		$T=100,n=100$ & 1.14 & 0.00 & 3.46 & 1.18 & 0.00 & 3.45  & 1.13 & 0.00 & 2.71 \\ 
		$T=100,n=500$ & 1.10 & 0.00 & 2.39 & 1.05 & 0.00 & 1.18  & 1.14 & 0.00 & 2.68 \\ 
		$T=100,n=1000$ & 1.12 & 0.00 & 3.33 & 1.11 & 0.00 & 3.19 & 1.10 & 0.00 & 1.49 \\ 
		$T=500,n=100$ & 1.05 & 0.02 & 6.91 & 1.06 & 0.00 & 6.85  & 1.09 & 0.00 & 9.90 \\ 
		$T=500,n=500$ & 1.08 & 0.00 & 11.25 & 1.11 & 0.00 & 12.68 & 1.11 & 0.00 & 14.62 \\ 
		$T=500,n=1000$ & 1.13 & 0.00 & 10.24 & 1.06 & 0.00 & 7.97  & 1.10 & 0.00 & 10.70\\  \hline\hline
	\end{tabular}
\end{table}

\subsection{Additional results on real data analysis}\label{sec:supp_data}
For robustness check, we run the analysis for MNBS and CZ with $h=7$, which corresponds to 1 week. With $h=7$, CZ detects 28 change-points at $t=10, 16, 27, 34, 65, 73, 81, 94, 104, 116, 127, 139,$ $145, 167, 175, 181, 192, 205, 213, 223, 230, 237, 254, 273, 283, 289, 314, 333$ and MNBS detects 11 change-points at $t=50, 58, 66, 95, 135, 149, 196, 206, 223, 248, 256.$

In Figure \ref{fig: MNBS_MIT}~(top), we plot the sequence of scan statistics $D(t,h)$~(solid curve) generated by MNBS, along with the threshold $\Delta_D$~(horizontal line), the local-maximizers $\mathcal{LM}$~(red points) and the estimated change-points $\widehat{\mathcal{J}}$~(vertical dashed line). Figure \ref{fig: MNBS_MIT}(bottom) plots the time series of total links of the dynamic network for illustration purposes. As can be seen, MNBS provides an approximately piecewise constant segmentation for the series.

\begin{figure}[h]
	\centering
	\includegraphics[scale=0.5]{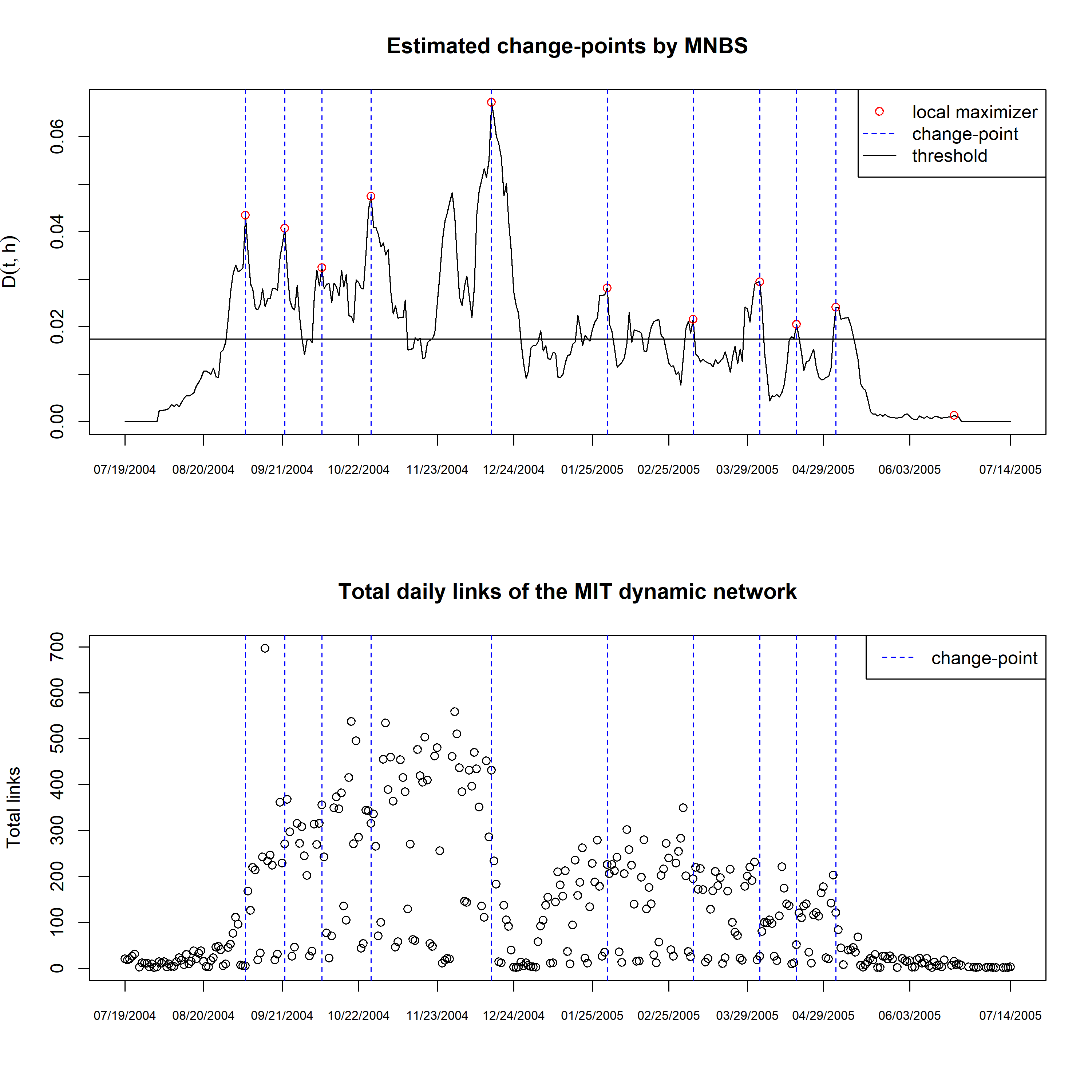}
	\caption{Scan statistics, local-maximizers and estimated change-points by MNBS for MIT network data.}
	\label{fig: MNBS_MIT}
\end{figure}

\clearpage
\section{Supplementary material: modified universal singular value thresholding~(MUSVT)}\label{sec:MUSVT}
In this section, we describe another estimator for link probability matrix $P$ based on repeated observations of a dynamic network over time via singular value thresholding. This is a modification of the universal singular value thresholding~(USVT) procedure proposed by \cite{chatterjee2015}. More specifically, as in the main text, assume $A^{(t)}$ ($t=1,\ldots,T$) such that $A_{ij}^{(t)}\sim \text{Bernoulli}(P_{ij})$ for $i\leq j$, independently. Let $\bar{A}=\sum_{t=1}^{T}A^{(t)}/T.$ 
In applying MUSVT to estimating  $P$, major steps can be summarized as follows:

\begin{itemize}
	\item[1.]Let $\bar A=\sum_{i=1}^n s_iu_iu_i^T$ be the singular value decomposition of the average adjacent matrix $\bar A$. 
	\item[2.]  Let $S=\{i: s_i\geq (2+\eta)\frac{\sqrt{n}}{\sqrt T}\}$,  where $\eta\in (0,1)$ is some small positive number. Let $\widehat{\bar A}=\sum_{i\in S} s_iu_iu_i^T$.
	
	\item [3.] Let $\widehat{P}=(\widehat{P}_{ij})$, where
	\vspace{-3mm}
	\begin{align*}
	\widehat{P}_{ij}:=\begin{cases}
	\widehat{\bar A}_{ij},&\quad \text{if}\; 0\leq \widehat{\bar A}_{ij}\leq 1\\
	1, &\quad \text{if}\; \widehat{\bar A}_{ij} \geq 1\\
	0,&\quad \text{if}\; \widehat{\bar A}_{ij} \leq 0.
	\end{cases}
	\end{align*}
\end{itemize}

$\widehat{P}$ serves as the final estimate for  $P$.  The key distinction  between our estimate and the one in \cite{chatterjee2015} is that we utilize $\bar A$, which allows us to lower the threshold level from an order of $\sqrt{n}$ to $\sqrt{n/T}$. Theorem \ref{th-usvt} quantifies the rate on $\widehat{P}$ in approximating $P$.
\begin{theorem}
	\label{th-usvt}
	Assume $P$ arises from a graphon  $f$ that is piecewise Lipschitz as defined in Definition \ref{piecewise_lipschitz}, then the following holds:
	\begin{equation}
	\label{eq-musvt}
	P\left(d_{F}(\widehat{P},P)^2\geq C(f, n, \delta) \frac{1}{\left(n^{1/3} T^{1/4}\right) }\right)\leq \epsilon(n, T), \text{ and }\epsilon(n, T)\rightarrow 0 \;\text{as}\; n, T\rightarrow \infty,
	\end{equation}
	under the condition that $T\leq n^{1-a}$ for some constant $a>0$, where $d_{F}(\cdot, \cdot)$ stands for the normalized (by $1/n$) Frobenius distance and $C(f,n, \delta)$ is a constant depending on $f$,  $n$ and $\delta$. 
\end{theorem}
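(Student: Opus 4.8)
The plan is to peel off the entrywise clipping, reduce the statement to a spectral-norm fact about the averaged matrix $\bar A$, and then feed a sharpened noise bound into Chatterjee's USVT analysis. First I would dispose of Step 3 of the algorithm. Since every true entry satisfies $P_{ij}\in[0,1]$ and coordinatewise projection onto $[0,1]$ is $1$-Lipschitz, the clipping is non-expansive entrywise, i.e. $|\widehat P_{ij}-P_{ij}|\le|\widehat{\bar A}_{ij}-P_{ij}|$ for every $(i,j)$. Summing the squares gives $\|\widehat P-P\|_F\le\|\widehat{\bar A}-P\|_F$, so it suffices to bound the normalized Frobenius error of the SVD-thresholded matrix $\widehat{\bar A}$ and then divide by $n$.

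The crux, and the only place where averaging over $t=1,\dots,T$ enters, is a spectral-norm bound on the noise $E:=\bar A-P$. Each off-diagonal entry $E_{ij}$ is a centered average of $T$ independent $\mathrm{Bernoulli}(P_{ij})$ variables, hence $|E_{ij}|\le 1$ and $\mathrm{Var}(E_{ij})\le 1/(4T)$. I would invoke a spectral-norm concentration inequality for symmetric matrices with independent, bounded, mean-zero entries (a matrix-Bernstein or Bandeira--van Handel type bound, or the Talagrand concentration used in \cite{chatterjee2015}) to show $\mathbb{E}\|E\|_{\mathrm{op}}\lesssim \sqrt{n/T}+\sqrt{\log n}$ together with concentration of $\|E\|_{\mathrm{op}}$ about its mean at scale $\sqrt{n/T}$. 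The hypothesis $T\le n^{1-a}$ forces $\sqrt{n/T}\ge n^{a/2}\gg\sqrt{\log n}$, so the variance term dominates and $\|E\|_{\mathrm{op}}\le(2+\eta)\sqrt{n}/\sqrt{T}$ holds with probability $1-\epsilon(n,T)$, $\epsilon(n,T)\to0$. This is exactly what legitimizes lowering the USVT threshold from the order $\sqrt{n}$ of \cite{chatterjee2015} to $(2+\eta)\sqrt{n}/\sqrt{T}$, and it is the step I expect to be the main obstacle: one must verify that the leading constant stays strictly below $2+\eta$ and that the deviation probability vanishes uniformly in the regime $T\le n^{1-a}$.

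On the event $\{\|E\|_{\mathrm{op}}\le(2+\eta)\sqrt{n/T}\}$ the threshold dominates the noise, so Chatterjee's deterministic USVT error bound applies with $\theta\asymp\sqrt{n/T}$ and controls $\|\widehat{\bar A}-P\|_F^2$ by a quantity of the form $\sum_i\min(\sigma_i(P)^2,\theta^2)$, i.e. a bias--variance trade-off between the retained rank (singular values of $P$ exceeding $\theta$) and the tail of the small singular values. To bound this sum I would exploit the low-rank approximability of a piecewise-Lipschitz-graphon matrix: partition $[0,1]$ into $m$ subintervals whose breakpoints include the $\le K$ graphon boundaries $x_0,\dots,x_K$ of Definition \ref{piecewise_lipschitz}; the resulting block-constant matrix $P^{\mathrm{blk}}$ has rank $O(m+K)$ and, by the within-piece Lipschitz property, satisfies $\|P-P^{\mathrm{blk}}\|_F^2\le n^2 L^2/m^2$ deterministically for every realization of the latent positions $\xi_i$. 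Plugging this tail estimate into the trade-off and optimizing the split $m$ against $\theta^2\asymp n/T$ produces a normalized squared Frobenius error that is at most of the order $(n^{1/3}T^{1/4})^{-1}$ claimed in the statement, with the constant $C(f,n,\delta)$ absorbing $L$, $K$, $\eta$ and the $\delta$-separation (a careful optimization in fact sharpens the exponents, which is more than enough for the stated upper bound).

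Finally I would collect probabilities: the only stochastic ingredient is the spectral event, whose complement has probability $\epsilon(n,T)\to0$, giving the displayed tail bound \eqref{eq-musvt}. One technical point deserves care, though it is not the principal difficulty: although the Frobenius approximation bound above is deterministic in the $\xi_i$, the per-subinterval counts of the latent positions, and hence the exact data-dependent rank retained by $\widehat{\bar A}$, are random, so the bias--variance trade-off should be written as an infimum over the free parameter $m$ rather than for a single realized rank, and the $\le K$ boundary-crossing blocks should be kept as exact breakpoints so that they contribute only an $O(K)$ additive term to the rank rather than an $O(1)$ error per entry.
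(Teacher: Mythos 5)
Your proposal is correct in outline and would prove the theorem, but it follows a genuinely different route from the paper in the ``bias'' half of the argument. The paper's proof is essentially a three-line citation: it invokes Lemma 3.5 of \cite{chatterjee2015} to get $\|\widehat P-P\|_F\le K(\delta)\bigl(\|\bar A-P\|\,\|P\|_*\bigr)^{1/2}$, invokes Theorem 3.4 of \cite{chatterjee2015} for the spectral bound $P\bigl(\|\bar A-P\|\ge(2+\eta)\sqrt{n/T}\bigr)\le C_1\exp(-C_2 n/T)$ (this is the only place $T\le n^{1-a}$ is used, and the only place the averaging over $T$ enters), and takes the nuclear-norm bound on $\|P\|_*$ verbatim from Theorem 2.7 of \cite{chatterjee2015}. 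You share the essential variance ingredient --- the $(2+\eta)\sqrt{n/T}$ operator-norm bound, which you propose to rederive from matrix concentration plus Talagrand rather than cite; your observation that $T\le n^{1-a}$ forces $\sqrt{n/T}\gg\sqrt{\log n}$ is exactly the role that hypothesis plays in Chatterjee's Theorem 3.4 as well. Where you diverge is in replacing the nuclear-norm inequality by the hard-thresholding oracle bound $\sum_i\min(\sigma_i(P)^2,\theta^2)$ combined with an explicit rank-$(m+K)$ block approximation of the piecewise Lipschitz graphon. This costs you a standard but nontrivial oracle inequality that the paper never needs, but buys a sharper rate: optimizing $m\,n/T+n^2L^2/m^2$ gives normalized squared error of order $n^{-2/3}T^{-2/3}$, strictly better than the stated $n^{-1/3}T^{-1/4}$ (indeed, even the paper's own route via Lemma 3.5 literally yields $n^{-1/3}T^{-1/2}$ for the squared error, so the stated rate is conservative either way). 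Your handling of the clipping step and your caveat about the randomness of the per-interval counts are both sound; neither creates a gap.
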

\begin{proof}
The key gradients in the proof are to bound the spectral norm between $\bar A$ and $P$ as well as  the nuclear norm of $P$.
 Specifically, By Lemma 3.5 in \cite{chatterjee2015}, one has 
\begin{align}
\| \widehat{P}-P\|_{F}\leq K(\delta)\left( \|\bar A-P\| \|P\|_*\right)^{1/2},
\end{align}
By Theorem 3.4 in \cite{chatterjee2015},  one has under the conditions that $T<n^{1-a}$ for some $a>0$, 
\begin{align}
P\left(\| \bar A-P\|\geq (2+\eta)\sqrt{n/T}\right)\leq C_1(\epsilon)\exp\left ( -C_2n/T\right).
\end{align}

where $K(\delta)=(4+2\delta)\sqrt{2/\delta}+\sqrt{2+\delta},$  $\|\cdot\|_F$ is the Frobineus norm, $\|\cdot\|$ stands for the spectral norm  and $\|\cdot\|_*$ is nuclear norm.  The bound on $\|P\|_*$ is exactly the same as that in Theorem 2.7 in \cite{chatterjee2015}. Only the term  $\|\bar A-P\|$ affects the improved rate by a factor of $1/T^{1/4}$, resulting in the new rate of the theorem. 
\end{proof}

The MUSVT procedure proposed above can be used as the initial graphon estimate in designing our change-point detection algorithm.  We can prove consistency of the change-point estimation and obtain similar results as in Theorem \ref{cp_detection} for MNBS, only requiring a slightly higher threshold level and stronger conditions on the minimal true signal strength. The difference in the rates of the two quantities are affected by the quality of the initial graphon estimates. Although the rates in Theorem \ref{th-usvt} and the resulting requirements for consistency in change-point detection  are not as good as those of MNBS, MUSVT enjoys some computational advantages when the number of nodes $n$ is large, thus may still serve as an alternative practically. 

\section{Supplementary material: proof of theorems}
\begin{proposition}[Bernstein inequality]
\label{Bernstein}
Let $X_1, X_2,\ldots, X_n$ be independent zero-mean random variables. Suppose that $|X_i|\leq M$ a.s. for all $i$. Then for all positive $t$, we have
$$P\left(\sum_{i=1}^{n}X_i>t\right)\leq \exp\left(-\frac{\frac{1}{2}t^2}{\sum_{i=1}^{n}E(X_i^2)+\frac{1}{3}Mt}\right).$$
\end{proposition}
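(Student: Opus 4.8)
The plan is to use the standard Chernoff (exponential moment) method combined with Markov's inequality. First I would fix $\lambda>0$ and write
$$P\left(\sum_{i=1}^n X_i > t\right) = P\left(e^{\lambda \sum_i X_i} > e^{\lambda t}\right) \le e^{-\lambda t}\, E\left[e^{\lambda \sum_i X_i}\right] = e^{-\lambda t}\prod_{i=1}^n E\left[e^{\lambda X_i}\right],$$
where the last equality uses the independence of the $X_i$. This reduces the problem to controlling the individual moment generating functions $E[e^{\lambda X_i}]$ and then optimizing over $\lambda$.

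Next I would bound each MGF by Taylor-expanding the exponential and exploiting the two hypotheses $E[X_i]=0$ and $|X_i|\le M$. Since the zero-mean assumption kills the linear term,
$$E\left[e^{\lambda X_i}\right] = 1 + \sum_{k\ge 2}\frac{\lambda^k E[X_i^k]}{k!} \le 1 + E[X_i^2]\sum_{k\ge 2}\frac{\lambda^k M^{k-2}}{k!},$$
using the almost-sure bound $|E[X_i^k]|\le M^{k-2}E[X_i^2]$ for $k\ge 2$. The key algebraic step is the factorial estimate $k!\ge 2\cdot 3^{k-2}$, which turns the tail into a geometric series and yields, for $0<\lambda M<3$,
$$E\left[e^{\lambda X_i}\right] \le 1 + \frac{\lambda^2 E[X_i^2]/2}{1-\lambda M/3} \le \exp\left(\frac{\lambda^2 E[X_i^2]/2}{1-\lambda M/3}\right),$$
where the final inequality is $1+x\le e^x$. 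Taking the product over $i$ and writing $v=\sum_{i=1}^n E[X_i^2]$ gives $\prod_i E[e^{\lambda X_i}] \le \exp\left(\frac{\lambda^2 v/2}{1-\lambda M/3}\right)$.

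Combining the two estimates, $P\left(\sum_i X_i > t\right) \le \exp\left(-\lambda t + \frac{\lambda^2 v/2}{1-\lambda M/3}\right)$ for every $0<\lambda<3/M$, and the final step is to minimize the exponent over $\lambda$. Rather than differentiate, I would simply plug in the explicit choice $\lambda = t/(v+Mt/3)$, which lies in the admissible range, and verify by direct substitution (noting $1-\lambda M/3 = v/(v+Mt/3)$) that the exponent collapses to $-\frac{t^2/2}{v+Mt/3}$, matching the stated bound with $v=\sum_i E(X_i^2)$. I expect the main obstacle to be the moment-control step: obtaining the clean closed form $\frac{\lambda^2 v/2}{1-\lambda M/3}$ hinges on the precise estimate $k!\ge 2\cdot 3^{k-2}$, and selecting $\lambda$ so that the optimization reproduces exactly the $\frac13 Mt$ term in the denominator (rather than a looser constant) is the one place where the bookkeeping must be done carefully.
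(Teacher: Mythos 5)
Your proof is correct: the Chernoff bound, the moment estimate $|E[X_i^k]|\le M^{k-2}E[X_i^2]$, the factorial bound $k!\ge 2\cdot 3^{k-2}$ giving the geometric series, and the choice $\lambda=t/(v+Mt/3)$ all check out, and the exponent does collapse to $-\tfrac{t^2/2}{v+Mt/3}$ as claimed. Note, however, that the paper offers no proof of this proposition at all — it is stated as a classical tool and used directly in the subsequent lemmas — so there is nothing to compare against; what you have written is the standard textbook derivation of Bernstein's inequality, and it is complete.
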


By Proposition \ref{Bernstein}, for a sequence of independent Bernoulli random variables where $X_i\sim$ Bernoulli($p_i$), we have
$$P\left(\left|\sum_{i=1}^{n}(X_i-p_i)\right|>t\right)\leq 2\exp\left(-\frac{\frac{1}{2}t^2}{\sum_{i=1}^{n}p_i(1-p_i)+\frac{1}{3}t}\right).$$

In the following, we prove the theoretical properties of MNBS by first giving two Lemmas, which extends the result of Lemmas 1 and 2 in \cite{nbhd} to the case where $T\geq 1$ repeated observations of a network is available.

Denote $I_k=[x_{k-1},x_k)$ for $1\leq k \leq K-1$ and $I_K=[x_{K-1},1]$ for the intervals in Definition \ref{piecewise_lipschitz}, and denote $\delta=\min_{1\leq k\leq K}|I_k|.$ For any $\xi\in[0,1]$, let $I(\xi)$ denote the $I_k$ that contains $\xi$. Let $S_i(\Delta)=[\xi_i-\Delta,\xi_i+\Delta]\cap I(\xi_i)$ denote the neighborhood of $\xi_i$ in which $f(x, y)$ is Lipschitz in $x\in S_i(\Delta)$ for any fixed $y$.
\begin{lemma}[Neighborhood size]
\label{nbd_size}
For any global constants $C_1>B_1>0$, define $\Delta_n=C_1\frac{\log n}{n^{1/2}\omega}$. If $\frac{n^{1/2}}{\omega}\cdot\frac{(C_1-B_1)^2}{7C_1-B_1}> \gamma+1$, there exists $\tilde{C}_1>0$ such that for $n$ large enough so that $\Delta_n<\min_k|I_k|/2,$ we have
\begin{align*}
P\left(\min_i \frac{|\{i'\neq i: \xi_{i'}\in S_i(\Delta_n)\}|}{n-1}\geq B_1\frac{\log n}{n^{1/2}\omega}\right) \geq 1-2n^{-(\tilde{C}_1+\gamma)}.
\end{align*}
\end{lemma}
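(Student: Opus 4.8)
The plan is to condition on the position $\xi_i$, recognize the neighbor count as a sum of i.i.d.\ Bernoulli indicators, control its lower tail with the Bernstein bound of Proposition~\ref{Bernstein}, and finish with a union bound over the $n$ nodes. Write $N_i := |\{i'\neq i:\xi_{i'}\in S_i(\Delta_n)\}|$. Conditional on $\xi_i$, each remaining $\xi_{i'}$ falls in $S_i(\Delta_n)$ independently with probability $p_i=|S_i(\Delta_n)|$, the Lebesgue measure of $S_i(\Delta_n)$, so $N_i\mid\xi_i\sim\mathrm{Bin}(n-1,p_i)$. The first step is the geometric fact that $\Delta_n\le p_i\le 2\Delta_n$ for \emph{every} value of $\xi_i$: since $\Delta_n<\min_k|I_k|/2\le |I(\xi_i)|/2$, the side of the window $[\xi_i-\Delta_n,\xi_i+\Delta_n]$ lying away from the nearest endpoint of $I(\xi_i)$ is entirely contained in $I(\xi_i)$, contributing length at least $\Delta_n$, while the total contribution never exceeds $2\Delta_n$.

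Writing $b:=(n-1)\log n/(n^{1/2}\omega)$, the above gives mean $\mu_i=(n-1)p_i\ge C_1 b$ and variance $(n-1)p_i(1-p_i)\le\mu_i\le 2C_1 b$. The target threshold is $m:=B_1 b$, and since $C_1>B_1$ the mean sits above it by at least $t:=(C_1-B_1)b>0$. Applying Proposition~\ref{Bernstein} to the centered indicators (equivalently to their negatives, for the lower tail), and using that $s\mapsto \tfrac12 s^2/(V+\tfrac13 s)$ is increasing in $s$ and decreasing in $V$, the worst-case values $t$ and $V=2C_1 b$ yield, uniformly in $\xi_i$,
\begin{align*}
P(N_i\le m\mid\xi_i)\le \exp\!\left(-\frac{\tfrac12(C_1-B_1)^2 b^2}{2C_1 b+\tfrac13(C_1-B_1)b}\right)=\exp\!\left(-\frac{3(C_1-B_1)^2}{2(7C_1-B_1)}\,b\right),
\end{align*}
where the denominator collapses via $2C_1+\tfrac13(C_1-B_1)=(7C_1-B_1)/3$. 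As this bound does not depend on $\xi_i$, it also holds unconditionally.

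It remains to convert $b$ into the form in the hypothesis and take a union bound. Using $b=\tfrac{n-1}{n}\cdot\tfrac{n^{1/2}}{\omega}\log n$, the exponent equals $\tfrac32\cdot\tfrac{n-1}{n}\cdot\big[\tfrac{(C_1-B_1)^2}{7C_1-B_1}\cdot\tfrac{n^{1/2}}{\omega}\big]\log n$. The bracketed quantity exceeds $\gamma+1$ by assumption, and $\tfrac32\cdot\tfrac{n-1}{n}\to\tfrac32$, so for all large $n$ the exponent is at least $(\gamma+1+\tilde C_1)\log n$ for a suitable $\tilde C_1>0$ (for instance $\tilde C_1=(\gamma+1)/4$, valid once $\tfrac{n-1}{n}\ge 5/6$). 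A union bound over $i=1,\dots,n$ then gives
\begin{align*}
P\!\left(\min_i\frac{|\{i'\neq i:\xi_{i'}\in S_i(\Delta_n)\}|}{n-1}< B_1\frac{\log n}{n^{1/2}\omega}\right)\le n\,\exp\!\big(-(\gamma+1+\tilde C_1)\log n\big)=n^{-(\tilde C_1+\gamma)}\le 2n^{-(\tilde C_1+\gamma)},
\end{align*}
which is exactly the complement of the stated event.

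The Bernstein estimate and union bound are routine; the only step requiring genuine care is the two-sided measure bound $\Delta_n\le|S_i(\Delta_n)|\le 2\Delta_n$ together with the exact constant bookkeeping. Indeed, it is precisely the upper bound $p_i\le 2\Delta_n$, fed into the variance term, that produces the factor $7C_1-B_1$ in the denominator and thereby matches the sufficient condition $\tfrac{n^{1/2}}{\omega}\cdot\tfrac{(C_1-B_1)^2}{7C_1-B_1}>\gamma+1$ assumed in the statement.
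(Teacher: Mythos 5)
Your proof is correct and follows essentially the same route as the paper's: both rest on the geometric bound $\Delta_n\le|S_i(\Delta_n)|\le 2\Delta_n$, a Bernstein bound on the Bernoulli indicator counts, and a union bound over the $n$ nodes, arriving at the identical sufficient condition $\frac{n^{1/2}}{\omega}\cdot\frac{(C_1-B_1)^2}{7C_1-B_1}>\gamma+1$. The only immaterial difference is that you bound the lower tail one-sidedly (conditioning on $\xi_i$), whereas the paper applies a two-sided concentration of the count around $|S_i(\Delta_n)|$, which is where its factor of $2$ in $2n^{-(\tilde C_1+\gamma)}$ comes from.
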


\begin{proof}[Proof of Lemma \ref{nbd_size}]
For any $i$, by the definition of $S_i(\Delta_n)$, we know that $\Delta_n\leq |S_i(\Delta_n)| \leq 2\Delta_n$. By Bernstein inequality, we have
\begin{align*}
P\left(\left| \frac{|\{i'\neq i: \xi_{i'}\in S_i(\Delta_n)\}|}{n-1} -|S_i(\Delta_n)| \right|>\epsilon_n\right)&\leq
2\exp\left(-\frac{\frac{1}{2}(n-1)^2\epsilon_n^2}{(n-1)2\Delta_n+\frac{1}{3}(n-1)\epsilon_n}\right)\\
&\leq 2\exp\left(-\frac{\frac{1}{3}n\epsilon_n^2}{2\Delta_n+\frac{1}{3}\epsilon_n}\right).
\end{align*}
Take a union bound over all $i$'s gives
\begin{align*}
P\left(\max_i\left| \frac{|\{i'\neq i: \xi_{i'}\in S_i(\Delta_n)\}|}{n-1} -|S_i(\Delta_n)| \right|>\epsilon_n\right)&\leq
2n\exp\left(-\frac{\frac{1}{3}n\epsilon_n^2}{2\Delta_n+\frac{1}{3}\epsilon_n}\right).
\end{align*}
Let $\Delta_n=C_1\frac{\log n}{n^{1/2}\omega}$ and $\epsilon_n=C_2\frac{\log n}{n^{1/2}\omega}$ with $C_2=C_1-B_1>0,$ we have
\begin{align*}
&P\left(\max_i\left| \frac{|\{i'\neq i: \xi_{i'}\in S_i(\Delta_n)\}|}{n-1} -|S_i(\Delta_n)| \right|>\epsilon_n\right)\\
\leq&
2n\exp\left(-\frac{\frac{1}{3}n\epsilon_n^2}{2\Delta_n+\frac{1}{3}\epsilon_n}\right)\leq 2n\exp\left(-\frac{\frac{1}{3}nC_2^2\frac{\log n}{n^{1/2}\omega}}{2C_1+\frac{1}{3}C_2}\right)\\
=&2n^{1-\frac{n^{1/2}}{\omega}\frac{C_2^2}{6C_1+C_2}}\leq 2n^{-(\tilde{C}_1+\gamma)},
\end{align*}
for some $\tilde{C}_1>0$ as long as $\frac{n^{1/2}}{\omega}\cdot\frac{C_2^2}{6C_1+C_2}=\frac{n^{1/2}}{\omega}\cdot\frac{(C_1-B_1)^2}{7C_1-B_1}> 1+\gamma$. Thus, with probability $1-2n^{-(\tilde{C}_1+\gamma)}$, we have
\begin{align*}
\min_i \frac{|\{i'\neq i: \xi_{i'}\in S_i(\Delta_n)\}|}{n-1}& \geq \min_i S_i(\Delta_n)-\epsilon_n \geq \Delta_n-\epsilon_n \\
&= (C_1-C_2)\frac{\log n}{n^{1/2}\omega}=B_1\frac{\log n}{n^{1/2}\omega}.
\end{align*}
This completes the proof of Lemma \ref{nbd_size}.
\end{proof}

\begin{lemma}[Neighborhood approximation]
	\label{nbd_property}
	Suppose we select the neighborhood $\mathcal{N}_i$ by thresholding at the lower $q$-th quantile of $\{\tilde{d}(i,k):k\neq i\}$, where we set $q=B_0\frac{\log n}{n^{1/2}\omega}$ with $0<B_0\leq B_1$ for the $B_1$ from Lemma \ref{nbd_size}. For any global constant $C_3>0,$ if $\frac{C_3^2\log n}{6}\cdot\frac{T}{\omega^2}>2+\gamma$ and $\frac{C_3}{2}\cdot\frac{n^{1/2}}{\omega} >2+\gamma$ hold, there exists $\tilde{C}_2>0$ such that if $n$ is large enough so that (i) all conditions on $n$ in Lemma \ref{nbd_size} hold; (ii) $B_1\frac{n^{1/2}\log n}{\omega}\geq 4$, then the neighborhood $\mathcal{N}_i$ has the following properties:
	\begin{enumerate}
		\item $|\mathcal{N}_i|\geq B_0\frac{n^{1/2}\log n}{\omega}$.
		\item With probability $1-2n^{-(\tilde{C}_1+\gamma)}-2n^{-(\tilde{C}_2+\gamma)}$, for all $i$ and $i'\in\mathcal{N}_i$, we have
		$$\|P_{i\cdot}-P_{i'\cdot}\|_2^2/n \leq (6LC_1+24C_3)\frac{\log n}{n^{1/2}\omega}.$$
	\end{enumerate}
\end{lemma}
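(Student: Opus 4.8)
\textbf{Part 1} is immediate from the construction of $\mathcal N_i$: since $q_i(q)$ is the lower $q$-quantile of the $n-1$ distances $\{\tilde d(i,k):k\neq i\}$, at least $\lceil q(n-1)\rceil$ of them fall below the threshold, so $|\mathcal N_i|\ge q(n-1)=B_0\frac{\log n}{n^{1/2}\omega}(n-1)\ge B_0\frac{n^{1/2}\log n}{\omega}$ for $n$ large, condition (ii) guaranteeing that this count is bounded away from zero so the quantile is meaningfully defined. The substance is Part 2, and the plan is to pass through the \emph{population} version of the distance. Writing $\bar A=P+E$ with $E_{ij}=\bar A_{ij}-P_{ij}$ (mean zero, $|E_{ij}|\le 1$, and variance $\le \tfrac1{4T}$ because $\bar A_{ij}$ averages $T$ independent Bernoulli$(P_{ij})$ draws), I would work throughout with the normalized bilinear form $\frac1n\langle \bar A_{i\cdot}-\bar A_{i'\cdot},\bar A_{k\cdot}\rangle$, whose natural target is $\frac1n\langle P_{i\cdot}-P_{i'\cdot},P_{k\cdot}\rangle$.

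\textbf{Step A (uniform concentration).} First I would show that, with probability at least $1-2n^{-(\tilde C_2+\gamma)}$, for every triple $(i,i',k)$,
\[
\Big|\tfrac1n\langle \bar A_{i\cdot}-\bar A_{i'\cdot},\bar A_{k\cdot}\rangle-\tfrac1n\langle P_{i\cdot}-P_{i'\cdot},P_{k\cdot}\rangle\Big|\le C_3\frac{\log n}{n^{1/2}\omega}.
\]
Expanding $\bar A=P+E$ produces the signal term plus three error terms: two ``signal$\times$noise'' sums $\frac1n\sum_j(P_{ij}-P_{i'j})E_{kj}$ and $\frac1n\sum_j(E_{ij}-E_{i'j})P_{kj}$, and one ``noise$\times$noise'' sum $\frac1n\sum_j(E_{ij}-E_{i'j})E_{kj}$. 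Each signal$\times$noise sum is an average of mean-zero variables of variance $O(1/T)$, so Bernstein (Proposition \ref{Bernstein}) gives a tail of order $\exp(-c\,t^2 nT)$; at $t=C_3\frac{\log n}{n^{1/2}\omega}$ this matches the two hypotheses $\frac{C_3^2\log n}{6}\cdot\frac{T}{\omega^2}>2+\gamma$ and $\frac{C_3}{2}\cdot\frac{n^{1/2}}{\omega}>2+\gamma$, which are exactly what is needed to overcome the $O(n^3)$ union bound over triples. The noise$\times$noise sum is lower order; the only care needed is the symmetry correlation ($\bar A_{ik}=\bar A_{ki}$ couples rows $i$ and $k$ through a single entry), contributing a negligible $O(1/n)$ correction as in \cite{nbhd}.

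\textbf{Steps B--D (geometry via Lipschitz continuity).} Next I would convert the small normalized distance of a neighbor into a bound on $\|P_{i\cdot}-P_{i'\cdot}\|_2^2/n$. Algebraically, $\frac1n\|P_{i\cdot}-P_{i'\cdot}\|_2^2=\frac1n\langle P_{i\cdot}-P_{i'\cdot},P_{i\cdot}\rangle-\frac1n\langle P_{i\cdot}-P_{i'\cdot},P_{i'\cdot}\rangle$; since the max in $\tilde d^2$ excludes $k=i,i'$, I would replace $P_{i\cdot}$ and $P_{i'\cdot}$ by rows $P_{k^*\cdot},P_{k^{**}\cdot}$ with $\xi_{k^*},\xi_{k^{**}}$ within $\Delta_n=C_1\frac{\log n}{n^{1/2}\omega}$ of $\xi_i,\xi_{i'}$. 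Lemma \ref{nbd_size} supplies at least $B_1\frac{n^{1/2}\log n}{\omega}\ge 4$ such indices near each of $i$ and $i'$, so valid choices avoiding $\{i,i'\}$ exist; piecewise Lipschitzness (Definition \ref{piecewise_lipschitz}) together with Cauchy--Schwarz bounds the substitution error by $O(L\Delta_n)$. This yields $\frac1n\|P_{i\cdot}-P_{i'\cdot}\|_2^2\le 2\max_{k\neq i,i'}\frac1n|\langle P_{i\cdot}-P_{i'\cdot},P_{k\cdot}\rangle|+O(L\Delta_n)$. The \emph{same} Lipschitz mechanism bounds the quantile: for the $\ge B_1\frac{n^{1/2}\log n}{\omega}\ge q(n-1)$ indices $k$ with $\xi_k\in S_i(\Delta_n)$, the population distance is $O(L\Delta_n)$, hence by Step A their empirical $\tilde d^2(i,k)$ are $O((LC_1+C_3)\frac{\log n}{n^{1/2}\omega})$; since $B_0\le B_1$, the $q$-quantile $q_i(q)^2$ is at most this value.

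\textbf{Combining.} For $i'\in\mathcal N_i$ we have $\tilde d^2(i,i')\le q_i(q)^2$, so Step A converts this into a bound on the population max, and Step B then delivers $\|P_{i\cdot}-P_{i'\cdot}\|_2^2/n\le (6LC_1+24C_3)\frac{\log n}{n^{1/2}\omega}$ after tracking the constants (the factors of $2$--$3$ arise from the two directions $i,i'$ and from converting among $\tilde d^2$, the population max, and the quantile). The probability budget $1-2n^{-(\tilde C_1+\gamma)}-2n^{-(\tilde C_2+\gamma)}$ splits as the first term from Lemma \ref{nbd_size} (existence of nearby nodes) and the second from Step A (uniform concentration). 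I expect the \textbf{main obstacle} to be Step A: obtaining the uniform-over-triples concentration at the sharp rate $\frac{\log n}{n^{1/2}\omega}$ while correctly exploiting the variance-reduction factor $1/T$ from averaging and handling the mild dependence introduced by the symmetry of $\bar A$. Everything else is bookkeeping built on Lemma \ref{nbd_size} and Lipschitz continuity.
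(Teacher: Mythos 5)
Your proposal follows essentially the same route as the paper's proof: a uniform Bernstein bound on the deviation of the empirical codegree quantities $(\bar A^2/n)_{ij}$ from $(P^2/n)_{ij}$ at rate $\log n/(n^{1/2}\omega)$ (exploiting the $1/T$ variance reduction from averaging), the Lipschitz argument placing the $q$-quantile below $L\Delta_n+O(C_3\log n/(n^{1/2}\omega))$ via the $B_1$-fraction of nodes in $S_i(\Delta_n)$ supplied by Lemma \ref{nbd_size}, and the substitution of distinct nearby indices (using condition (ii)) to handle the terms excluded from the max in $\tilde d^2$. The only cosmetic difference is that you expand $\bar A=P+E$ and union over triples where the paper applies Bernstein directly to $\bar A_{ik}\bar A_{kj}-P_{ik}P_{kj}$ and unions over pairs; since your triple quantity $\frac1n\langle \bar A_{i\cdot}-\bar A_{i'\cdot},\bar A_{k\cdot}\rangle$ is a difference of two pair quantities, the pairwise bound suffices and the stated exponent conditions ($>2+\gamma$) need no strengthening.
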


\begin{proof}[Proof of Lemma \ref{nbd_property}]
	The first claim follows immediately from the definition of quantile and $q$, since $|\mathcal{N}_i|\geq n\cdot q=nB_0\frac{\log n}{n^{1/2}\omega}=B_0\frac{n^{1/2}\log n}{\omega}$.
	
	To prove the second claim, we first give a concentration result. For any $i,j$ such that $i\neq j$, we have
	\begin{align*}
	&\left|(\bar{A}^2/n)_{ij}-(P^2/n)_{ij}\right|=\left|\sum_k (\bar{A}_{ik}\bar{A}_{kj}-P_{ik}P_{kj})\right|/n\\
	\leq & \frac{\left|\sum_{k\neq i,j} (\bar{A}_{ik}\bar{A}_{kj}-P_{ik}P_{kj}) \right|}{n-2}\cdot\frac{n-2}{n} + 
	 \frac{\left|(\bar{A}_{ii}\bar{A}_{ij}-P_{ii}P_{ij}) \right|}{n} +  \frac{\left|(\bar{A}_{ij}\bar{A}_{jj}-P_{ij}P_{jj}) \right|}{n}.
	\end{align*}
	We can easily show that
	$$\text{Var}(\bar{A}_{ik}\bar{A}_{kj})=\frac{P_{ik}^2P_{kj}(1-P_{kj}) + P_{kj}^2P_{ik}(1-P_{ik})}{T}+\frac{P_{ik}(1-P_{ik})P_{kj}(1-P_{jk})}{T^2}\leq \frac{1}{T}.$$
	Thus, by the independence among $\bar{A}_{ik}\bar{A}_{kj}$ and Bernstein inequality, we have
	\begin{align*}
	P\left(\frac{\left|\sum_{k\neq i,j} (\bar{A}_{ik}\bar{A}_{kj}-P_{ik}P_{kj}) \right|}{n-2} \geq \epsilon_n \right)\leq 2\exp\left(-\frac{\frac{1}{2}(n-2)^2\epsilon_n^2}{(n-2)\frac{1}{T}+\frac{1}{3}(n-2)\epsilon_n}\right)\leq 2\exp\left(-\frac{\frac{1}{3}n\epsilon_n^2}{\frac{1}{T}+\frac{1}{3}\epsilon_n}\right).
	\end{align*}
	Take a union bound over all $i\neq j$, we have
	\begin{align}
	\label{A2_concentration}
	&P\left(\max_{i,j:i\neq j}\frac{\left|\sum_{k\neq i,j} (\bar{A}_{ik}\bar{A}_{kj}-P_{ik}P_{kj}) \right|}{n-2} \geq \epsilon_n \right)\leq 2n^2\exp\left(-\frac{\frac{1}{3}n\epsilon_n^2}{\frac{1}{T}+\frac{1}{3}\epsilon_n}\right)
	\\\leq& 2n^2\max\left(\exp\left(-\frac{1}{6}nT\epsilon_n^2 \right),\exp\left(-\frac{1}{2}n\epsilon_n \right)\right).\nonumber
	\end{align}
	Let $\epsilon_n=C_3\frac{\log n}{n^{1/2}\omega}$, we have
    \begin{align*}
    &2n^2\exp\left(-\frac{1}{6}nT\epsilon_n^2 \right)=2n^2\exp\left(-\frac{1}{6}nTC_3^2\frac{(\log n)^2}{n\omega^2}\right)=2n^{2-\frac{C_3^2\log n}{6}\cdot\frac{T}{\omega^2}}\leq 2n^{-(\tilde{C}_2+\gamma)}/3,\\
    &2n^2\exp\left(-\frac{1}{2}n\epsilon_n \right) = 2n^2\exp\left(-\frac{1}{2}n C_3\frac{\log n}{n^{1/2}\omega} \right)=2n^{2-\frac{C_3n^{1/2}}{2\omega}}\leq 2n^{-(\tilde{C}_2+\gamma)}/3,
    \end{align*}
    for some $\tilde{C}_2>0$ as long as $\frac{C_3^2\log n}{6}\cdot\frac{T}{\omega^2}>2+\gamma$ and $\frac{C_3}{2}\cdot\frac{n^{1/2}}{\omega} >2+\gamma.$
    
    Similarly, we have
    \begin{align*}
    P\left(\max_{i,j:i\neq j}  \frac{\left|(\bar{A}_{ii}\bar{A}_{ij}-P_{ii}P_{ij}) \right|}{n} >\epsilon_n \right) \leq 2n^2\exp\left(-\frac{\frac{1}{3}n^2\epsilon_n^2}{\frac{1}{T}+\frac{1}{3}n\epsilon_n}\right)\leq 2n^{-(\tilde{C}_2+\gamma)}/3,
    \end{align*}
    as long as $\frac{C_3^2\log n}{6}\cdot\frac{n}{\omega^2}\cdot T>2+\gamma$ and $\frac{C_3}{2}\cdot\frac{n^{1/2}}{\omega} >2+\gamma$
    
    Thus, combine the above results, we have that with probability $1-2n^{-(\tilde{C}_2+\gamma)}$,
    $$\max_{i,j:i\neq j}\left|(\bar{A}^2/n)_{ij}-(P^2/n)_{ij}\right|\leq 3\epsilon_n=3C_3\frac{\log n}{n^{1/2}\omega}.$$
    
    Following the same argument as \cite{nbhd}, we have that for all $i$ and any $\tilde{i}$ such that $\xi_{\tilde{i}}\in S_i(\Delta_n)$,
    \begin{align*}
    \left|(P^2/n)_{ik}-(P^2/n)_{\tilde{i}k} \right| = \left|\langle P_{i\cdot},P_{k\cdot}\rangle-\langle P_{\tilde{i}\cdot},P_{k\cdot}\rangle\right|/n
    \leq \|P_{i\cdot}-P_{\tilde{i}\cdot}\|_2\|P_{k\cdot}\|_2/n\leq L\Delta_n,
    \end{align*}
    for all $k=1,\ldots,n$, where the last inequality follows from the piecewise Lipschitz condition of the graphon such that
    $$\left|P_{\tilde{i}l}-P_{il}\right|=\left|f(\xi_{\tilde{i}},\xi_l)-f(\xi_i,\xi_l)\right|\leq L|\xi_{\tilde{i}}-\xi_i|\leq L\Delta_n \text{ for all } l=1,\ldots,n, $$
    and from $\|P_{k\cdot}\|_2\leq n^{1/2}$ for all $k.$
    
    We now try to upper bound $\tilde{d}(i,i')$ for all $i'\in \mathcal{N}_i$. We first bound $\tilde{d}(i,\tilde{i})$ for all $\tilde{i}$ with $\xi_{\tilde{i}}\in S_i(\Delta_n)$ simultaneously. By above, we know that with probability $1-2n^{-(\tilde{C}_2+\gamma)}$, we have
    \begin{align*}
    \tilde{d}(i,\tilde{i})&=\max_{k\neq i,\tilde{i}}\left|(\bar{A}^2/n)_{ik}-(\bar{A}^2/n)_{\tilde{i}k}\right|\leq \max_{k\neq i,\tilde{i}}\left|(P^2/n)_{ik}-(P^2/n)_{\tilde{i}k}\right| + 2\max_{i,j:i\neq j} \left|(\bar{A}^2/n)_{ij}-(P^2/n)_{ij}\right|\\
    & \leq L\Delta_n + 6C_3\frac{\log n}{n^{1/2}\omega},
    \end{align*}
    for all $i$ and any $\tilde{i}$ such that $\xi_{\tilde{i}}\in S_i(\Delta_n)$.
    
    By the above result and Lemma \ref{nbd_size}, we know that with probability $1-2n^{-(\tilde{C}_1+\gamma)}-2n^{-(\tilde{C}_2+\gamma)}$, for all $i,$ at least $B_1\frac{\log n}{n^{1/2}\omega}$ fraction of nodes $\tilde{i}\neq i$ satisfy both $\xi_{\tilde{i}}\in S_i(\Delta_n)$ and $\tilde{d}(i,\tilde{i})\leq L\Delta_n + 6C_3\frac{\log n}{n^{1/2}\omega}$. Thus we have
    $$\tilde{d}(i,i')\leq L\Delta_n + 6C_3\frac{\log n}{n^{1/2}\omega}$$
    holds for all $i$ and all $i'\in \mathcal{N}_i$ simultaneously with probability $1-2n^{-(\tilde{C}_1+\gamma)}-2n^{-(\tilde{C}_2+\gamma)}$, since by definition nodes in $\mathcal{N}_i$ have the lowest $q=B_0\frac{\log n}{n^{1/2}\omega}\leq B_1\frac{\log n}{n^{1/2}\omega}$ fraction of values in $\{\tilde{d}(i,k), k\neq i\}$.
    
    We are now ready to complete the proof of the second claim of Lemma \ref{nbd_property}. With probability $1-2n^{-(\tilde{C}_1+\gamma)}-2n^{-(\tilde{C}_2+\gamma)}$, for $n$ large enough such that $\min_i |\{i'\neq i: \xi_{i'}\in S_i(\Delta_n)\}|\geq B_1\frac{n^{1/2}\log n}{\omega}\geq 4$~(by Lemma \ref{nbd_size}), we have that for all $i$ and $i'\in \mathcal{N}_i$, we can find $\tilde{i}\in S_i(\Delta_n)$, $\tilde{i'}\in S_{i'}(\Delta_n)$ such that $\tilde{i}, i, \tilde{i'}, i'$ are different from each other and
    \begin{align*}
    \|P_{i\cdot}-P_{i'\cdot}\|_2^2/n &= (P^2/n)_{ii}-(P^2/n)_{i'i} + (P^2/n)_{i'i'} -(P^2/n)_{ii'}\\
    &\leq |(P^2/n)_{ii}-(P^2/n)_{i'i}| + |(P^2/n)_{i'i'} -(P^2/n)_{ii'}|\\
    &\leq |(P^2/n)_{i\tilde{i}}-(P^2/n)_{i'\tilde{i}}| + |(P^2/n)_{i'\tilde{i'}} -(P^2/n)_{i\tilde{i'}}|+4L\Delta_n\\
    &\leq |(\bar{A}^2/n)_{i\tilde{i}}-(\bar{A}^2/n)_{i'\tilde{i}}| + |(\bar{A}^2/n)_{i'\tilde{i'}} -(\bar{A}^2/n)_{i\tilde{i'}}|+4L\Delta_n + 12C_3\frac{\log n}{n^{1/2}\omega}\\
    &\leq 2\max_{k\neq i,i'} |(\bar{A}^2/n)_{ik}-(\bar{A}^2/n)_{i'k}| +4L\Delta_n + 12C_3\frac{\log n}{n^{1/2}\omega}\\
    &=2\tilde{d}(i,i')  +4L\Delta_n + 12C_3\frac{\log n}{n^{1/2}\omega}\leq 6L\Delta_n + 24 C_3\frac{\log n}{n^{1/2}\omega}\\
    &=(6LC_1+24C_3)\frac{\log n}{n^{1/2}\omega}.
    \end{align*}
    This completes the proof of Lemma \ref{nbd_property}.
\end{proof}

Based on Lemma \ref{nbd_size} and \ref{nbd_property}, we are now ready to prove Theorem \ref{MNBS_convergence}, which provides the error bound for MNBS.

\begin{proof}[Proof of Theorem \ref{MNBS_convergence}]
	To prove Theorem \ref{MNBS_convergence}, it suffices to show that with high probability, the following holds for all $i$.
	$$\frac{1}{n}\sum_j (\tilde{P}_{ij}-P_{ij})^2\leq C\cdot \frac{\log n}{n^{1/2}\omega}.$$
	
	We first perform a bias-variance decomposition such that
	\begin{align*}
	&\frac{1}{n}\sum_j (\tilde{P}_{ij}-P_{ij})^2=\frac{1}{n}\sum_j \left\{\frac{\sum_{i'\in\mathcal{N}_i}(\bar{A}_{i'j}-P_{ij})}{|\mathcal{N}_i|} \right\}^2\\
	=&\frac{1}{n}\sum_j \left\{\frac{\sum_{i'\in\mathcal{N}_i}(\bar{A}_{i'j}-P_{i'j})+(P_{i'j}-P_{ij})}{|\mathcal{N}_i|} \right\}^2\\
	\leq &\frac{2}{n}\sum_j \left\{\frac{\sum_{i'\in\mathcal{N}_i}(\bar{A}_{i'j}-P_{i'j})}{|\mathcal{N}_i|} \right\}^2 + \frac{2}{n}\sum_j \left\{\frac{\sum_{i'\in\mathcal{N}_i}(P_{i'j}-P_{ij})}{|\mathcal{N}_i|} \right\}^2\\
	=&2\cdot\frac{1}{n}\sum_j\left\{J_1(i,j) + J_2(i,j)\right\}.
	\end{align*}
	
	Thus, our goal is to bound $\frac{1}{n}\sum_j J_1(i,j)$ and $\frac{1}{n}\sum_jJ_2(i,j)$. We first bound $n^{-1}\sum_j J_1(i,j)$. We have
	\begin{align*}
	\frac{1}{n}\sum_j J_1(i,j)=&\frac{1}{n|\mathcal{N}_i|^2}\sum_j \left\{\sum_{i'\in\mathcal{N}_i}(\bar{A}_{i'j}-P_{i'j}) \right\}^2\\
	=&\frac{1}{n|\mathcal{N}_i|^2}\sum_j \left\{\sum_{i'\in\mathcal{N}_i}(\bar{A}_{i'j}-P_{i'j})^2 + \sum_{i'\in\mathcal{N}_i}\sum_{i''\neq i',i''\in\mathcal{N}_i}(\bar{A}_{i'j}-P_{i'j})(\bar{A}_{i''j}-P_{i''j}) \right\}\\
	=&\frac{1}{|\mathcal{N}_i|^2}\sum_{i'\in\mathcal{N}_i}\left\{ \frac{1}{n}\sum_j(\bar{A}_{i'j}-P_{i'j})^2+ \frac{1}{n}\sum_j\sum_{i''\neq i',i''\in\mathcal{N}_i}(\bar{A}_{i'j}-P_{i'j})(\bar{A}_{i''j}-P_{i''j}) \right\}.
	\end{align*}
	
	For the first term, by Bernstein inequality and the fact that $\text{Var}\left[(\bar{A}_{i'j}-P_{i'j})^2\right]\leq 1/T^2$ for all $i',j$, we have
	\begin{align*}
	&P\left(\frac{1}{n}\left|\sum_j \left[(\bar{A}_{i'j}-P_{i'j})^2-E(\bar{A}_{i'j}-P_{i'j})^2 \right] \right|>\epsilon_n \right)\\
	=&P\left(\frac{1}{n}\left|\sum_j \left[(\bar{A}_{i'j}-P_{i'j})^2-\frac{P_{i'j}(1-P_{i'j})}{T} \right] \right|>\epsilon_n \right)\\
	\leq & 2 \exp\left(-\frac{\frac{1}{2}n^2\epsilon_n^2}{\sum_{j=1}^{n}\frac{1}{T^2}+\frac{1}{3}n\epsilon_n}\right)=
	2 \exp\left(-\frac{\frac{1}{2}n\epsilon_n^2}{\frac{1}{T^2}+\frac{1}{3}\epsilon_n}\right).
	\end{align*}
	Let $\epsilon_n=C_4\left(\frac{\log n}{\omega}\right)^2$, 
	by union bound, there exists $\tilde{C}_3>0$ such that
	\begin{align*}
		&P\left(\max_{i'}\frac{1}{n}\left|\sum_j \left[(\bar{A}_{i'j}-P_{i'j})^2-E(\bar{A}_{i'j}-P_{i'j})^2 \right] \right|>\epsilon_n \right)\\
		\leq& 2n\exp\left(-\frac{\frac{1}{2}n\epsilon_n^2}{\frac{1}{T^2}+\frac{1}{3}\epsilon_n}\right) \leq 2n \max\left( \exp\left(-\frac{1}{4}n\epsilon_n^2T^2\right),
		\exp\left(-\frac{3}{4}n\epsilon_n\right) \right)\leq 2n^{-(\tilde{C}_3+\gamma)},
	\end{align*}
	for any $\gamma>0$ as long as $\frac{C_4^2 (\log n)^3}{4}\cdot \frac{n}{\omega^2}\cdot \frac{T^2}{\omega^2}>(1+\gamma)$ and $\frac{3C_4\log n}{4}\cdot \frac{n}{\omega^2}>1+\gamma$. Thus, with probability $1-2n^{-(\tilde{C}_3+\gamma)}$, we have
	$$\max_{i'}\frac{1}{n}\sum_j(\bar{A}_{i'j}-P_{i'j})^2\leq \frac{1}{T}+C_4\left(\frac{\log n}{\omega}\right)^2.$$
	
	For the second term, we have
	\begin{align*}
	&\frac{1}{n}\sum_j\sum_{i'\in\mathcal{N}_i}\sum_{i''\neq i',i''\in\mathcal{N}_i}(\bar{A}_{i'j}-P_{i'j})(\bar{A}_{i''j}-P_{i''j}) \\
	\leq &\sum_{i'\in\mathcal{N}_i}\sum_{i''\neq i',i''\in\mathcal{N}_i}\left|\frac{1}{n}\sum_j(\bar{A}_{i'j}-P_{i'j})(\bar{A}_{i''j}-P_{i''j})\right| \\
	\leq &\sum_{i'\in\mathcal{N}_i}\sum_{i''\neq i',i''\in\mathcal{N}_i}\left\{\left|\frac{1}{n-2}\sum_{j\neq i',i''}(\bar{A}_{i'j}-P_{i'j})(\bar{A}_{i''j}-P_{i''j})\right|\cdot\frac{n-2}{n} \right.\\
	&\left.+\frac{\left|(\bar{A}_{i'i'}-P_{i'i'})(\bar{A}_{i''i'}-P_{i''i'})\right|}{n}+\frac{\left|(\bar{A}_{i'i''}-P_{i'i''})(\bar{A}_{i''i''}-P_{i''i''})\right|}{n}
	\right\}.
	\end{align*}
	Note that for any $i'\neq i''$, $\text{Var}\left[ (\bar{A}_{i'j}-P_{i'j})(\bar{A}_{i''j}-P_{i''j})\right]\leq 1/T^2$. Similar to the proof of Lemma \ref{nbd_property}, via Bernstein inequality and union bound, we can show that there exists $\tilde{C}_4>0$ such that
	$$P\left(\max_{i',i'':i'\neq i''}\left|\frac{1}{n}\sum_j(\bar{A}_{i'j}-P_{i'j})(\bar{A}_{i''j}-P_{i''j})\right|\leq 3C_5\frac{\log n}{n^{1/2}\omega}\right)\geq1-2n^{-(\tilde{C}_4+\gamma)} ,$$
	as long as $\frac{C_5^2\log n}{6}\cdot\frac{T^2}{\omega^2}>2+\gamma$ and $\frac{C_5}{2}\cdot\frac{n^{1/2}}{\omega} >2+\gamma.$
	
	Combine above results, we have that with probability $1-2n^{-(\tilde{C}_3+\gamma)}-2n^{-(\tilde{C}_4+\gamma)}$,
	\begin{align*}
	\frac{1}{n}\sum_jJ_1(i,j)&\leq \frac{1}{|\mathcal{N}_i|^2}\sum_{i'\in\mathcal{N}_i}\left(\frac{1}{T}+C_4 \frac{\log n}{\omega^2} + (|\mathcal{N}_i|-1) 3C_5\frac{\log n}{n^{1/2}\omega}\right)\\
	&\leq \frac{1}{|\mathcal{N}_i|}\left(\frac{1}{T}+C_4 \left(\frac{\log n}{\omega}\right)^2 \right) + 3C_5 \frac{\log n}{n^{1/2}\omega}\\
	&\leq \left(\frac{1}{B_0}\cdot\frac{\omega^2}{T(\log n)^2} +C_4/B_0+3C_5 \right) \frac{\log n}{n^{1/2}\omega}\\
	&\leq \left((1+C_4)/B_0+3C_5 \right) \frac{\log n}{n^{1/2}\omega},
	\end{align*}
	as long as $\frac{\omega^2}{T(\log n)^2}\leq 1.$
	
	We now bound $n^{-1}\sum_jJ_2(i,j)$. By Lemma \ref{nbd_property}, with probability $1-2n^{-(\tilde{C}_1+\gamma)}-2n^{-(\tilde{C}_2+\gamma)}$, for all $i$ simultaneously, we have
	\begin{align*}
	\frac{1}{n}\sum_j J_2(i,j)&=\frac{1}{n}\sum_j \left\{\frac{\sum_{i'\in\mathcal{N}_i}(P_{i'j}-P_{ij})}{|\mathcal{N}_i|} \right\}^2 \leq \frac{1}{n}\frac{\sum_j\sum_{i'\in\mathcal{N}_i} (P_{i'j}-P_{ij})^2}{|\mathcal{N}_i|}\\
	&=\frac{\sum_{i'\in\mathcal{N}_i}\|P_{i'\cdot}-P_{i\cdot}\|_2^2/n }{|\mathcal{N}_i|}\leq (6LC_1+24C_3)\frac{\log n}{n^{1/2}\omega},
	\end{align*}
	where the first inequality follows from Cauchy-Schwarz inequality.
	
	Thus, together, with probability $1-2n^{-(\tilde{C}_1+\gamma)}-2n^{-(\tilde{C}_2+\gamma)}-2n^{-(\tilde{C}_3+\gamma)}-2n^{-(\tilde{C}_4+\gamma)}$, we have for all $i$ simultaneously
	\begin{align}
	\label{MNBS_finalproof}
	\frac{1}{n}\sum_j (\tilde{P}_{ij}-P_{ij})^2\leq 2\left((1+C_4)/B_0+3C_5+ 6LC_1+24C_3\right)\frac{\log n}{n^{1/2}\omega}=C\frac{\log n}{n^{1/2}\omega}.
	\end{align}
	
	Collecting all the conditions on $n, T, \omega$ for \eqref{MNBS_finalproof} to hold, we have 
	\begin{enumerate}
		\item Lemma \ref{nbd_size}: $\frac{n^{1/2}}{\omega}\cdot\frac{(C_1-B_1)^2}{7C_1-B_1}> \gamma+1$ and $C_1>B_1>0$;
		\item Lemma \ref{nbd_property}: $\frac{C_3^2\log n}{6}\cdot\frac{T}{\omega^2}>2+\gamma$ and $\frac{C_3}{2}\cdot\frac{n^{1/2}}{\omega} >2+\gamma$ and $B_1\geq B_0$;
		\item $\frac{C_4^2 (\log n)^3}{4}\cdot \frac{n}{\omega^2}\cdot \frac{T^2}{\omega^2}>(1+\gamma)$ and $\frac{3C_4\log n}{4}\cdot \frac{n}{\omega^2}>1+\gamma$; \\
		      $\frac{C_5^2\log n}{6}\cdot\frac{T^2}{\omega^2}>2+\gamma$ and $\frac{C_5}{2}\cdot\frac{n^{1/2}}{\omega} >2+\gamma$;\\
		      $\frac{\omega^2}{T(\log n)^2}\leq 1.$
	\end{enumerate}
	
	It is easy to see that, for any $\gamma>0$ and $B_0>0$, we can always find $B_1,C_1,C_3,C_4,C_5$ such that all inequalities in (1)-(3) hold for all $n$ large enough as long as $\omega \leq \min(n^{1/2},(T\log n)^{1/2})$. Take $\omega = \min(n^{1/2},(T\log n)^{1/2})$, this completes the proof of Theorem \ref{MNBS_convergence}.
\end{proof}

\begin{proof}[Proof of Theorem \ref{cp_detection}]
	Denote $\bar{P}_{t1,h}=\sum_{i=t-h+1}^{t}P^{(i)}/h$ and $\bar{P}_{t2,h}=\sum_{i=t+1}^{t+h}P^{(i)}/h$. For each $t=1,\ldots, T$, $\tilde{P}_{t1,h}$ and $\tilde{P}_{t2,h}$ are MNBS estimators for $\bar{P}_{t1,h}$ and $\bar{P}_{t2,h}$ respectively. 
	
	We call $t$ an $h$-flat point if there is no change-point within $\{t-h+1,\ldots,t+h-1\}$. The main idea of the proof is to analyze the behavior of $\tilde{P}_{t1,h}$ and $\tilde{P}_{t2,h}$ for all $h$-flat points and all true change-points among $t=1,\ldots,T$.
	
	The key observation is that for both an $h$-flat point and an true change-point, the adjacency matrices $\{A^{(i)}\}_{i=t-h+1}^t$ or $\{A^{(i)}\}_{i=t+1}^{t+h}$ that are used in the estimation of $\tilde{P}_{t1,h}$ or $\tilde{P}_{t2,h}$ are generated by the same probability matrix $P$ and thus the result of Theorem \ref{MNBS_convergence} can be directly applied.
	
	By assumption we have $(h\log n)^{1/2}< n^{1/2}$, thus $\omega=\min(n^{1/2}, (h\log n)^{1/2})=(h\log n)^{1/2}$. Thus by Theorem \ref{MNBS_convergence}, for any $t$ that is an $h$-flat point, we have
	\begin{align*}
	&P(D(t,h)>\Delta_D)=P(d_{2,\infty}(\tilde{P}_{t1,h},\tilde{P}_{t2,h})^2>\Delta_D)\\
	\leq & P(d_{2,\infty}(\tilde{P}_{t1,h},\bar{P}_{t1,h})^2+ d_{2,\infty}(\tilde{P}_{t2,h},\bar{P}_{t2,h})^2>\Delta_D/2)\\
	\leq & P(d_{2,\infty}(\tilde{P}_{t1,h},\bar{P}_{t1,h})^2>\Delta_D/4) + P(d_{2,\infty}(\tilde{P}_{t2,h},\bar{P}_{t2,h})^2>\Delta_D/4)\leq 2n^{-\gamma},
	\end{align*}
	where the second to last inequality uses the fact that $\bar{P}_{t1,h}=\bar{P}_{t2,h}$ for an $h$-flat point, and the last inequality follows from Theorem \ref{MNBS_convergence} and the fact that $\Delta_D/(C(\log n)^{1/2}/(n^{1/2}h^{1/2}))\to \infty$ for any $C>0.$
	
	For any $t$ that is a true change-point, we have
	\begin{align*}
	&P(D(t,h)>\Delta_D)=P(d_{2,\infty}(\tilde{P}_{t1,h},\tilde{P}_{t2,h})^2>\Delta_D)\\
	\geq & P(d_{2,\infty}(\bar{P}_{t1,h},\bar{P}_{t2,h})- d_{2,\infty}(\tilde{P}_{t1,h},\bar{P}_{t1,h})-d_{2,\infty}(\tilde{P}_{t2,h},\bar{P}_{t2,h})>\sqrt{\Delta_D})\\
	= & P(d_{2,\infty}(\tilde{P}_{t1,h},\bar{P}_{t1,h})+d_{2,\infty}(\tilde{P}_{t2,h},\bar{P}_{t2,h})< d_{2,\infty}(\bar{P}_{t1,h},\bar{P}_{t2,h})-\sqrt{\Delta_D})\\
	\geq & P(d_{2,\infty}(\tilde{P}_{t1,h},\bar{P}_{t1,h})+d_{2,\infty}(\tilde{P}_{t2,h},\bar{P}_{t2,h})< \sqrt{\Delta_D}(\sqrt{\Delta^*/\Delta_D}-1))\\
	\geq & 1- P(d_{2,\infty}(\tilde{P}_{t1,h},\bar{P}_{t1,h})^2> \Delta_D(\sqrt{\Delta^*/\Delta_D}-1)^2/4) \\
	&~~~ - P(d_{2,\infty}(\tilde{P}_{t2,h},\bar{P}_{t2,h})^2> \Delta_D(\sqrt{\Delta^*/\Delta_D}-1)^2/4) \geq 1- 2n^{-\gamma},
	\end{align*}
	where the second inequality uses the fact that $d_{2,\infty}(\bar{P}_{t1,h},\bar{P}_{t2,h})\geq \sqrt{\Delta^*}$ for a true change-point, and the last inequality follows from Theorem \ref{MNBS_convergence} and the fact that $\Delta_D/(C(\log n)^{1/2}/(n^{1/2}h^{1/2}))\to \infty$ for any $C>0.$
	
	Let $\mathcal{F}_h$ be the set of all flat points $t$ and $\mathcal{J}$ be the set of all true change-points. Consider the event $\mathcal{A}_\tau=\{D(\tau,h)>\Delta_D \}$ for true change-points $\tau\in\mathcal{J}$ and the event $\mathcal{B}_t=\{D(t,h)<\Delta_D\}$ for flat points $t\in \mathcal{F}_h$. Define the event
	\begin{align*}
	\mathcal{\xi}_n=\left(\bigcap_{\tau \in\mathcal{J}}\mathcal{A}_\tau \right)\bigcap \left(\bigcap_{t \in\mathcal{F}_h}\mathcal{B}_t \right).
	\end{align*}
	By the above result, we have that
	\begin{align*}
	P(\mathcal{\xi}_n)=1-P(\mathcal{\xi}_n^c)\geq 1-P\left(\bigcup_{\tau\in\mathcal{J}}A_\tau^c\right)-P\left(\bigcup_{t\in\mathcal{F}_h}\mathcal{B}_t^c\right)\geq 1-2Tn^{-\gamma}\to 1,
	\end{align*}
	as long as $Tn^{-\gamma}\to 0$.
	
	We now prove that $\xi_n$ implies the event $\{\hat{J}=J\} \cap \{\mathcal{J}\subset: \hat{\mathcal{J}}\pm h\}$. Under $\xi_n$, no flat point will be selected at the thresholding steps. Thus, for any point $\hat{\tau}\in\mathcal{J}$, there is at least one change-point in its neighborhood $\{\hat{\tau}-h+1,\ldots, \hat{\tau}+h-1\}$. On the other hand, by assumption $h<D^*/2$, thus, there exists at most one change-point in $\{\hat{\tau}-h+1,\ldots, \hat{\tau}+h-1\}$. Together, it implies that there is exactly one change-point in $\{\hat{\tau}-h+1,\ldots, \hat{\tau}+h-1\}$ for each $\hat{\tau}\in\hat{\mathcal{J}}.$
	
	Meanwhile, under $\xi_n$, for every true change-point $\tau \in \mathcal{J}$, we have $D(\tau,h)>\Delta_D$. Note that $\tau-h$ and $\tau+h$ are $h$-flat points since $h<D^*/2$, thus $\max(D(\tau+h,h),D(\tau-h,h))<\Delta_D.$ Thus, for every true change-point $\tau,$ there exists a local maximizer, say $\hat{\tau}$, which is in $\{\tau-h+1,\ldots,\tau+h-1 \}$ with $D(\hat{\tau},h)\geq D(\tau,h)>\Delta_D.$
	
	Combining the above result, we have that $$P\left(\{\hat{J}=J\} \cap \{\mathcal{J}\subset: \hat{\mathcal{J}}\pm h\}\right) \to 1.$$
\end{proof}


\end{document}